\definecolor{mygreen}{rgb}{0,0.6,0}
\newtheorem{theorem}{Theorem}
\newcounter{theorem_counter}
\newtheorem{lemma}{Lemma}
\newcounter{lemma_counter}
\begin{document}

%\title{Taskflow: A Lightweight Parallel and Heterogeneous Task Graph Computing System}
\title{Taskflow: A Lightweight Parallel and Heterogeneous Task Graph Computing System}

%\author{
%    Tsung-Wei Huang, Dian-Lun Lin\\
%    Department of ECE\\
%    University of Utah\\
%  \and
%    Yibo Lin\\
%    Department of CS\\
%    Peking University\\
%  \and
%    Chun-Xun Lin\\
%    MathWorks, Inc\\
%}

\author{Tsung-Wei Huang, Dian-Lun Lin, Chun-Xun Lin, and Yibo Lin
\IEEEcompsocitemizethanks{\IEEEcompsocthanksitem Tsung-Wei Huang, and Dian-Lun Lin are with the Department of Electrical and Computer Engineering, the University of Utah, Salt Lake City, UT.
\IEEEcompsocthanksitem Chun-Xun Lin is with MathWorks, USA.
\IEEEcompsocthanksitem Yibo Lin is with the Department of Computer Science, Peking University, Beijing, China.
% note need leading \protect in front of \\ to get a newline within \thanks as
% \\ is fragile and will error, could use \hfil\break instead.
}}% <-this % stops a space
%\thanks{Preliminary version of this paper has been presented 
%at the 2019 IEEE International Parallel and Distributed Processing Symposium (IPDPS'19), Rio de Janeiro, Brazil, May 2019~\cite{Huang_19_01}.}}

\markboth{Transactions on Parallel and Distributed Systems}%
{}

\IEEEtitleabstractindextext{%
\begin{abstract}
Taskflow aims to streamline the building of parallel and heterogeneous
applications using a lightweight task graph-based approach.
Taskflow introduces an expressive task graph programming model
to assist developers in the implementation of parallel and heterogeneous 
decomposition strategies on a heterogeneous computing platform.
Our programming model distinguishes itself as a very general class of task graph 
parallelism with in-graph control flow
to enable end-to-end parallel optimization.
%that harness the power of CPU-GPU computing and in-task control flow.
To support our model with high performance,
we design an efficient system runtime that solves many of the new scheduling
challenges arising out of our models
and optimizes the performance across
latency, energy efficiency, and throughput.
We have demonstrated the promising performance of Taskflow 
in real-world applications.
%As an example, we solved a large machine learning workload 
%by up to 1.5$\times$ faster, 1.6$\times$ less memory, and 1.7$\times$ fewer lines of code 
%than two industrial-strength systems, oneTBB and StarPU,
%on a machine of 40 CPUs and 4 GPUs.
As an example,
Taskflow solves a large-scale machine learning workload up to 
29\% faster,
1.5$\times$ less memory,
and 1.9$\times$ higher throughput
than the industrial system, oneTBB,
on a machine of 40 CPUs and 4 GPUs.
We have opened the source of Taskflow and deployed it to large numbers of
users in the open-source community.

\end{abstract}

\begin{IEEEkeywords}
Parallel programming, task parallelism, high-performance computing, modern C++ programming
\end{IEEEkeywords}}

%\date{}
\maketitle

\section{Introduction}

\IEEEPARstart{T}{ask} graph computing system (TGCS) plays
an essential role in advanced scientific computing.
Unlike loop-based models, TGCSs
encapsulate function calls and their dependencies in a top-down
task graph to implement \textit{irregular} parallel decomposition strategies
that scale to large numbers of processors,
including manycore central processing units (CPUs) and graphics processing units (GPUs).
As a result, recent years have seen a great deal amount of TGCS research,
just name a few,
oneTBB FlowGraph~\cite{TBB}, StarPU~\cite{StarPU},
TPL~\cite{TPL}, Legion~\cite{Legion}, Kokkos-DAG~\cite{Kokkos}, 
PaRSEC~\cite{PaRSEC}, HPX~\cite{HPX}, and Fastflow~\cite{Fastflow}.
These systems have enabled vast success in a variety of 
scientific computing applications,
such as machine learning, data analytics, and simulation.

However, three key limitations
prevent existing TGCSs from exploring the full potential of task graph parallelism.
First, existing TGCSs closely rely on directed acyclic graph (DAG) models
to define tasks and dependencies.
Users implement \textit{control-flow} decisions outside the graph description,
which typically results in rather complicated implementations that lack
\textit{end-to-end parallelism}.
For instance, when encountering an if-else block,
users need to synchronize the graph execution with a TGCS runtime,
which could otherwise be omitted if in-graph control-flow tasks are supported.
Second, 
existing TGCSs do not align well with modern hardware.
In particular, new GPU task graph parallelism, such as CUDA Graph,
can bring significant yet largely untapped performance benefits.
Third,
existing TGCSs are good at either CPU- or GPU-focused workloads, but rarely both
simultaneously.
Consequently, we introduce in this paper \textit{Taskflow},
a lightweight TGCS to overcome these limitations.
%We introduce \textit{Taskflow}, 
%a lightweight task graph computing system 
%to streamline the building of 
%parallel and heterogeneous applications with general control flow.
%%Taskflow explores effective tradeoff between 
%%programming productivity, 
%%solution generality,
%%and performance scalability.
%The design philosophy of Taskflow is being \textit{simple} and \textit{transparent}.
%For a parallel computing system to be widely useful,
%it must be easily adopted by users to quickly 
%express common computational patterns in their heterogeneous applications.
%%While Taskflow is motivated to address the parallelization challenges of CAD,
%%we design it to be a general-purpose system that can benefit 
%%generic scientific software developers.
%%As CAD has inspired many of the most challenging computational problems
%%(e.g., graph algorithms, big-data analytics, massively-parallel simulations) 
%%in the world~\cite{ABK_18_01, Lu_18_01},
%%we believe Taskflow is generalizable to other problem domains and, more importantly, 
%%can \textit{complement} the current state-of-the-art
%%from a different, important application angle.
We summarize three main contributions of Taskflow as follows:

\begin{itemize}[leftmargin=*]\itemsep=2pt

\item \textbf{Expressive programming model} --
We design an expressive task graph programming model 
by leveraging modern C++ closure.
Our model enables efficient implementations of parallel and heterogeneous 
decomposition strategies using the task graph model.
The expressiveness of our model lets developers perform rather a lot of work
with relative ease of programming.
Our user experiences lead us to believe that, although it requires some effort to learn,
a programmer can master our APIs needed for many applications in just a few hours.

\item \textbf{In-graph control flow} --
We design a new conditional tasking model to support \textit{in-graph control flow}
beyond the capability of traditional DAG models
that prevail in existing TGCSs.
Our condition tasks enable developers to 
integrate control-flow decisions, 
such as conditional dependencies, cyclic execution, and non-deterministic flows
into a task graph of end-to-end parallelism.
In case applications have frequent dynamic behavior,
such as optimization and branch and bound,
programmers can efficiently overlap tasks
both inside and outside the control flow
to hide expensive control-flow costs.

\item \textbf{Heterogeneous work stealing} --
We design an efficient work-stealing algorithm to adapt
the number of workers to dynamically generated task parallelism
at any time during the graph execution.
Our algorithm prevents the graph execution from underutilized
threads that is harmful to performance,
while avoiding excessive waste of thread resources when available tasks are scarce.
The result largely improves the overall system performance,
including latency, energy usage, and throughput.
%Unlike existing work-stealing strategies that are normally good at either CPU- 
%or GPU-focused workloads,
We have derived theory results to justify the efficiency of our work-stealing
algorithm.
\end{itemize}

We have evaluated Taskflow on real-world applications
to demonstrate its promising performance.
As an example,
Taskflow solved a large-scale machine learning problem up to 
29\% faster,
1.5$\times$ less memory,
and 1.9$\times$ higher throughput
than the industrial system, oneTBB~\cite{TBB},
on a machine of 40 CPUs and 4 GPUs.
We believe Taskflow stands out as a unique system given
the ensemble of software tradeoffs and architecture decisions we have made.
Taskflow is open-source at GitHub under MIT license and is being 
used by many academic and industrial projects~\cite{Taskflow}.
%We believe Taskflow can contribute to the excellence of the parallel computing community
%by giving an extra boost to its software ecosystems.

\section{Motivations}

Taskflow is motivated by our DARPA project to reduce the long
design times of modern circuits~\cite{IDEA}.
The main research objective is to advance \textit{computer-aided design} (CAD) tools
with heterogeneous parallelism to achieve
transformational performance and productivity milestones.
Unlike traditional loop-parallel scientific computing problems,
many CAD algorithms exhibit \textit{irregular computational patterns} 
and \textit{complex control flow} 
that require strategic task graph decompositions to benefit from 
heterogeneous parallelism~\cite{Huang_20_02}.
%For example, optimizing the timing of a million-gate circuit can spawn
%millions of CPU-GPU dependent tasks in iterations
%that takes several hours to finish.
This type of complex parallel algorithm
is difficult to implement and execute efficiently using mainstream TGCS.
We highlight three reasons below,
\textit{end-to-end tasking}, \textit{GPU task graph parallelism},
and \textit{heterogeneous runtimes}.

\textbf{End-to-End Tasking} --
Optimization engines implement various graph and combinatorial algorithms
that frequently call for iterations, conditionals, and dynamic control flow.
Existing TGCSs~\cite{TBB, OpenMP, Nabbit, StarPU, Legion, PaRSEC, Kokkos, HPX, TPL},
closely rely on DAG models to define tasks and their dependencies.
Users implement control-flow decisions \textit{outside} the graph description 
via either statically unrolling the graph across fixed-length iterations 
or dynamically executing an ``if statement'' on the fly to decide the next path and so forth.
These solutions often incur rather complicated implementations
that lack \textit{end-to-end} parallelism using just one task graph entity.
For instance, when describing an iterative algorithm using a DAG model,
we need to repetitively wait for the task graph to complete at the end of each
iteration.
This wait operation is not cheap because it involves synchronization
between the application code and the TGCS runtime,
which could otherwise be totally avoided by supporting in-graph control-flow tasks.
More importantly, developers can benefit by making in-graph control-flow decisions
to efficiently overlap tasks both inside and outside control flow,
completely decided by a dynamic scheduler.
%

%Furthermore, for large circuits, 
%optimization algorithms often partition the work into many CPU-GPU dependent tasks.
%Control-flow costs (e.g., data movement, synchronization) frequently
%occur at the boundary between CPU and GPU tasks.
%While such costs are not avoidable, 
%they can be hidden from executing in-graph control-flow tasks 
%to overlap with other tasks in an end-to-end task graph.

\textbf{GPU Task Graph Parallelism} --
Emerging GPU task graph acceleration, such as CUDA Graph~\cite{CUDAGraph},
can offer dramatic yet largely untapped performance advantages
by running a GPU task graph directly on a GPU.
This type of GPU task graph parallelism is particularly beneficial 
for many large-scale analysis and machine learning algorithms 
that compose thousands of dependent GPU operations
to run on the same task graph using iterative methods.
By creating an executable image for a GPU task graph, 
we can iteratively launch it with extremely low kernel overheads. 
However, 
existing TGCSs are short of a generic model to 
express and offload task graph parallelism 
directly on a GPU,
as opposed to a simple encapsulation of GPU operations into CPU tasks.

\textbf{Heterogeneous Runtimes} --
Many CAD algorithms compute extremely large circuit graphs. 
Different quantities are often dependent on each other, 
via either logical relation or physical net order, 
and are expensive to compute. 
The resulting task graph in terms of encapsulated function calls and task dependencies 
is usually very large. 
For example, the task graph representing a timing analysis on a million-gate design 
can add up to \textit{billions} of tasks that take several hours to finish~\cite{Huang_21_02}.
During the execution, tasks can run on CPUs or GPUs, or more frequently \textit{a mix}.
Scheduling these heterogeneously dependent tasks is a big challenge. 
Existing runtimes are good at either CPU- or GPU-focused work but rarely both simultaneously.

Therefore,
we argue that there is a critical need for a new heterogeneous task graph 
programming environment that supports in-graph control flow.
The environment must handle new scheduling challenges,
such as conditional dependencies and cyclic executions.
To this end, Taskflow aims to 
(1) introduce a new programming model that enables end-to-end expressions of
CPU-GPU dependent tasks along with algorithmic control flow and
(2) establish an efficient system runtime to support our model with high performance
across latency, energy efficiency, and throughput.
Taskflow focuses on a single heterogeneous node of CPUs and GPUs.

%Scheduling heterogeneously dependent tasks across accelerators is another big challenge.
%Existing schedulers are either good at CPU- or GPU-focused workloads, but rarely both
%simultaneously~\cite{Lima_15_01, Mittal_15_01}. 
%Neither are they generalizable to other heterogeneous domains.
%Taskflow aims to overcome these challenges by 
%designing a new programming model and system runtime 
%to support parallel and heterogeneous task graph computing
%with control flow.
%Taskflow aims to overcome these challenges by (1) realizing
%an end-to-end expression of heterogeneously dependent tasks 
%along with control flow
%and (2) designing an efficient system runtime to handle the
%new scheduling challenges arising out of our models.
%scheduling tasks with high performance
%across latency, energy efficiency, and throughput.

\section{Preliminary Results}

Taskflow is established atop our prior system,
\textit{Cpp-Taskflow}~\cite{Huang_21_02} which targets
CPU-only parallelism using a DAG model,
and extends its capability to heterogeneous computing using 
a new \textit{heterogeneous task dependency graph} (HTDG) programming model
beyond DAG.
Since we opened the source of Cpp-Taskflow/Taskflow,
it has been successfully adopted by much software, including important 
CAD projects~\cite{OpenRoad, Huang_20_01, ABCDPlace, Magical} 
under the DARPA ERI IDEA/POSH program~\cite{IDEA}.
Because of the success, 
we are recently invited to publish a 5-page TCAD brief to overview how Taskflow
address the parallelization challenges of CAD workloads~\cite{Huang_21_01}.
For the rest of the paper,
we will provide comprehensive details of the Taskflow system 
from the top-level programming model to the system runtime,
including several new technical materials for control-flow primitives, 
capturer-based GPU task graph parallelism, work-stealing algorithms and theory results,
and experiments.

\section{Taskflow Programming Model}

%In this section, we discuss Taskflow's new heterogeneous task graph
%programming model with general control flow.
This section discusses five fundamental task types of Taskflow,
\textit{static task}, \textit{dynamic task}, \textit{module task},
\textit{condition task}, and \textit{cudaFlow} task.

\subsection{Static Tasking}

Static tasking is the most basic task type in Taskflow.
A static task takes a callable of no arguments and runs it.
The callable can be a generic C++ lambda function object,
binding expression, or a functor.
Listing~\ref{SimpleTDG} demonstrates a simple Taskflow program
of four static tasks, 
%The program creates a task dependency graph of four tasks, \texttt{A}, \texttt{B}, 
%\texttt{C}, and \texttt{D},
where \texttt{A} runs before \texttt{B} and \texttt{C}, 
and \texttt{D} runs after \texttt{B} and \texttt{C}.
%Each task is a callable object, such as a C++ lambda
%and a function object.
The graph is run by an \textit{executor} 
which schedules dependent tasks across worker threads.
Overall, the code explains itself.

\begin{lstlisting}[language=C++,label=SimpleTDG,caption={A task graph of four static tasks.}]
tf::Taskflow taskflow;
tf::Executor executor;
auto [A, B, C, D] = taskflow.emplace(
  [] () { std::cout << "Task A"; },
  [] () { std::cout << "Task B"; },
  [] () { std::cout << "Task C"; },
  [] () { std::cout << "Task D"; } 
);
A.precede(B, C);  // A runs before B and C
D.succeed(B, C);  // D runs after  B and C
executor.run(tf).wait();
\end{lstlisting}

\subsection{Dynamic Tasking}

Dynamic tasking refers to the creation of a task graph during 
the execution of a task.
Dynamic tasks are spawned from a parent task and
are grouped to form a hierarchy called
\textit{subflow}.
%The same methods defined for static tasking are all applicable for dynamic tasking.
Figure \ref{fig::DynamicTDG} shows an example of dynamic tasking.
The graph has four static tasks, \texttt{A}, \texttt{C}, 
\texttt{D}, and \texttt{B}.
The precedence constraints force \texttt{A} to run 
before \texttt{B} and \texttt{C},
and \texttt{D} to run after \texttt{B} and \texttt{C}.
During the execution of task \texttt{B},
it spawns another graph of three tasks, \texttt{B1}, \texttt{B2}, and \texttt{B3},
where \texttt{B1} and \texttt{B2} run before \texttt{B3}.
In this example, \texttt{B1}, \texttt{B2}, and \texttt{B3} are 
grouped to a subflow parented at \texttt{B}.

\begin{figure}[!h]
  \centering
  \includegraphics[width=.8\columnwidth]{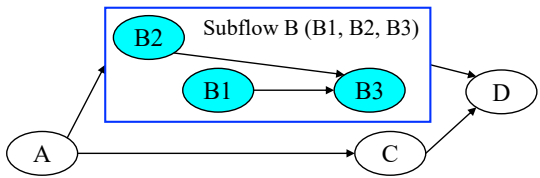}
  \caption{A task graph that spawns another task graph (\texttt{B1}, \texttt{B2}, and \texttt{B3}) during the execution of task \texttt{B}.}
  \label{fig::DynamicTDG}
\end{figure}

\begin{lstlisting}[language=C++,caption={Taskflow code of Figure~\ref{fig::DynamicTDG}.},label=listing::DynamicTDG]
auto [A, C, D] = taskflow.emplace(
  [] () { std::cout << "A"; },
  [] () { std::cout << "C"; },
  [] () { std::cout << "D"; }
);
auto B = tf.emplace([] (tf::Subflow& subflow) {
  std::cout << "B\n";
  auto [B1, B2, B3] = subflow.emplace(
    [] () { std::cout << "B1"; },
    [] () { std::cout << "B2"; },
    [] () { std::cout << "B3"; }
  );
  B3.succeed(B1, B2);
});
A.precede(B, C);
D.succeed(B, C);
\end{lstlisting}

Listing \ref{listing::DynamicTDG} shows the Taskflow code in
Figure \ref{fig::DynamicTDG}.
A dynamic task accepts a reference of type \texttt{tf::Subflow}
that is created by the executor during the execution of task \texttt{B}.
A subflow inherits
all graph building blocks of static tasking.
By default, a spawned subflow joins its parent task
(\texttt{B3} precedes its parent \texttt{B} implicitly),
forcing a subflow to follow the subsequent dependency constraints 
of its parent task.
Depending on applications, users can detach a subflow from its parent task
using the method \texttt{detach},
allowing its execution to flow independently.
A detached subflow will eventually join its parent taskflow.

\subsection{Composable Tasking}

Composable tasking enables developers to define task hierarchies 
and compose large task graphs from modular and reusable blocks
that are easier to optimize.
%Graph decomposition is a key element to improve productivity and scalability
%of programming large parallel and heterogeneous workloads.
Figure \ref{fig::ComposableTasking} gives an example of 
a Taskflow graph using composition.
The top-level taskflow defines one static task \texttt{C} that runs before
a dynamic task \texttt{D} that spawns two dependent tasks \texttt{D1} and \texttt{D2}.
Task \texttt{D} precedes a \textit{module task} \texttt{E} that composes a taskflow of 
two dependent tasks \texttt{A} and \texttt{B}.

\begin{figure}[!h]
  \centering
  \centerline{\includegraphics[width=1.\columnwidth]{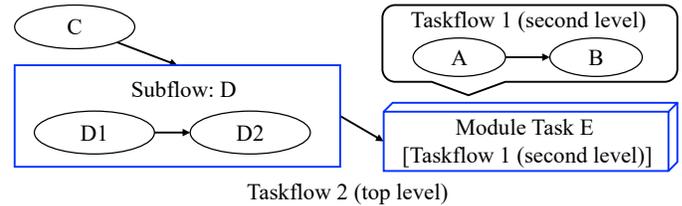}}
  \caption{An example of taskflow composition.}
  \label{fig::ComposableTasking}
\end{figure}

\begin{lstlisting}[language=C++,label=listing::ComposableTasking,caption={Taskflow code of Figure \ref{fig::ComposableTasking}.}]
// file 1 defines taskflow1
tf::Taskflow taskflow1;
auto [A, B] = taskflow1.emplace(
  [] () { std::cout << "TaskA"; },
  [] () { std::cout << "TaskB"; }
);  
A.precede(B);
// file 2 defines taskflow2
tf::Taskflow taskflow2;
auto [C, D] = taskflow2.emplace(
  [] () { std::cout << "TaskC"; },
  [] (tf::Subflow& sf) { 
    std::cout << "TaskD"; 
    auto [D1, D2] = sf.emplace(
      [] () { std::cout << "D1"; },
      [] () { std::cout << "D2"; }
    );  
    D1.precede(D2);
  }   
);  
auto E = taskflow2.composed_of(taskflow1); // module
D.precede(E);
C.precede(D);
\end{lstlisting}

Listing \ref{listing::ComposableTasking} shows the 
Taskflow code of Figure \ref{fig::ComposableTasking}.
It declares two taskflows, \texttt{taskflow1} and \texttt{taskflow2}.
\texttt{taskflow2} forms a module task \texttt{E} by calling
the method \texttt{composed\_of} from \texttt{taskflow1},
which is then preceded by task \texttt{D}.
Unlike a subflow task, a module task does not own the taskflow but maintains
a soft mapping to its composed taskflow.
Users can create multiple module tasks from the same taskflow 
but they must not run concurrently;
on the contrary, subflows are created dynamically and can run concurrently.
In practice, we use composable tasking to partition large parallel programs
into smaller or reusable taskflows in separate files 
(e.g., \texttt{taskflow1} in file 1 and \texttt{taskflow2} in file 2) 
to improve program modularity and testability.
Subflows are instead used for enclosing a task graph that needs stateful data referencing
via lambda capture.
%For example, when describing a many-scenario timing analysis algorithm~\cite{Huang_20_01},
%we can create a unique taskflow for each scenario and compose them to form a top-level taskflow.

%Figure \ref{fig::invalid_composition} shows an invalid taskflow composition,
%since the two module tasks may race.
%Composition can be nested or recursive.
%Our runtime is able to run each layer of taskflow hierarchies,
%regardless of static or dynamic tasking.
%
%\begin{figure}[!h]
%  \centering
%  \centerline{\includegraphics[width=1.\columnwidth]{}}
%  \caption{An invalid taskflow composition. 
%  The two module tasks composed of \texttt{F1} may race.}
%  \label{fig::invalid_composition}
%\end{figure}

\subsection{Conditional Tasking}

We introduce a new \textit{conditional tasking} model 
to overcome the limitation of existing frameworks in expressing \textit{general control flow}
beyond DAG.
A condition task is a callable that returns an integer index 
indicating the next successor task to execute.
The index is defined with respect to the order of the successors preceded
by the condition task.
Figure \ref{fig::conditional-tasking-if-else} shows an example of if-else 
control flow, and
Listing \ref{listing::conditional-tasking-if-else} gives its implementation.
The code is self-explanatory.
The condition task, \texttt{cond},
precedes two tasks, \texttt{yes} and \texttt{no}.
With this order,
if \texttt{cond} returns \texttt{0}, the execution moves on to \texttt{yes},
or \texttt{no} if \texttt{cond} returns \texttt{1}.

\begin{figure}[h]
  \centering
  \centerline{\includegraphics[width=.65\columnwidth]{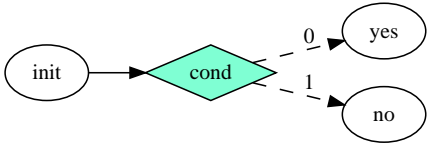}}
  \caption{A Taskflow graph of if-else control flow using one condition task (in diamond).}
  \label{fig::conditional-tasking-if-else}
\end{figure}

\begin{lstlisting}[language=C++,label=listing::conditional-tasking-if-else,caption={Taskflow program of Figure \ref{fig::conditional-tasking-if-else}.}]
auto [init, cond, yes, no] = taskflow.emplace(
 [] () { std::cout << "init"; },
 [] () { std::cout << "cond"; return 0; },
 [] () { std::cout << "cond returns 0"; },
 [] () { std::cout << "cond returns 1"; }
);
cond.succeed(init)
    .precede(yes, no);
\end{lstlisting}

Our condition task supports iterative control flow by introducing
a \textit{cycle} in the graph.
Figure \ref{fig::conditional-tasking-do-while} shows a task graph
of \textit{do-while} iterative control flow,
implemented in Listing \ref{listing::conditional-tasking-do-while}. 
The loop continuation condition is implemented by a single condition task, 
\texttt{cond},
that precedes two tasks, \texttt{body} and \texttt{done}.
When \texttt{cond} returns \texttt{0}, the execution loops back to \texttt{body}.
When \texttt{cond} returns \texttt{1}, the execution moves onto \texttt{done}
and stops.
In this example, we use only four tasks even though the control flow
spans 100 iterations.
Our model is more efficient and expressive than existing frameworks 
that count on dynamic tasking or recursive parallelism
to execute condition on the fly~\cite{Legion, PaRSEC}.

\begin{figure}[h]
  \centering
  \centerline{\includegraphics[width=.9\columnwidth]{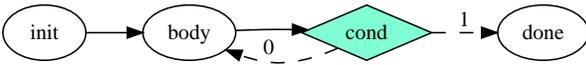}}
  \caption{A Taskflow graph of iterative control flow using one condition task.}
  \label{fig::conditional-tasking-do-while}
\end{figure}

\begin{lstlisting}[language=C++,label=listing::conditional-tasking-do-while,caption={Taskflow program of Figure \ref{fig::conditional-tasking-do-while}.}]
int i;
auto [init, body, cond, done] = taskflow.emplace(
  [&](){ i=0; },
  [&](){ i++; },
  [&](){ return i<100 ? 0 : 1; },
  [&](){ std::cout << "done"; }
);
init.precede(body);
body.precede(cond);
cond.precede(body, done);
\end{lstlisting}

Furthermore, our condition task can model non-deterministic control flow
where many existing models do not support.
Figure \ref{fig::conditional-tasking} 
shows an example of nested non-deterministic control flow 
frequently used in stochastic optimization (e.g., VLSI floorplan annealing~\cite{SAforVLSI}).
The graph consists of two regular tasks, \texttt{init} and \texttt{stop},
and three condition tasks, \texttt{F1}, \texttt{F2}, and \texttt{F3}.
Each condition task forms a dynamic control flow 
to randomly go to either the next task or loop back to \texttt{F1} 
with a probability of 1/2.
Starting from \texttt{init}, the expected number of condition tasks
to execute before reaching \texttt{stop} is eight.
Listing \ref{listing::conditional-tasking} implements 
Figure \ref{fig::conditional-tasking} in just 11 lines of code.

\begin{figure}[h]
  \centering
  \centerline{\includegraphics[width=1.\columnwidth]{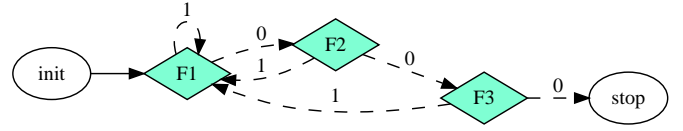}}
  \caption{A Taskflow graph of non-deterministic control flow using three condition tasks.}
  \label{fig::conditional-tasking}
\end{figure}

\begin{lstlisting}[language=C++,label=listing::conditional-tasking,caption={Taskflow program of Figure \ref{fig::conditional-tasking}.}]
auto [init, F1, F2, F3, stop] = taskflow.emplace(
  [] () { std::cout << "init"; },
  [] () { return rand()%2 },
  [] () { return rand()%2 },
  [] () { return rand()%2 },
  [] () { std::cout << "stop"; }
);
init.precede(F1);
F1.precede(F2, F1);
F2.precede(F3, F1);
F3.precede(stop, F1);
\end{lstlisting}

%\textbf{Key advantages:}
The advantage of our conditional tasking is threefold.
First, it is simple and expressive. 
Developers benefit from the ability to make \textit{in-graph} control-flow decisions 
that are integrated within task dependencies.
This type of decision making is different from dataflow~\cite{HPX}
as we do not abstract data but tasks,
and is more general than the primitive-based method~\cite{Yu_18_01}
that is limited to domain applications.
Second,
condition tasks can be associated with other tasks 
to integrate control flow into a unified graph entity.
Users ought not to partition the control flow or unroll it to a flat DAG,
but focus on expressing dependent tasks and control flow.
The later section will explain our scheduling algorithms for condition tasks.
Third,
our model enables
developers to efficiently overlap tasks both inside and outside
control flow.
For example, Figure \ref{fig::conditional-tasking-complex}
implements a task graph of three control-flow blocks,
and \texttt{cond\_1} can run in parallel with
\texttt{cond\_2} and \texttt{cond\_3}.
This example requires only 30 lines of code.
%CAD algorithms
%that frequently call for dynamic control flow along with thousands 
%to millions of tasks in an optimization loop
%(e.g., timing closure)~\cite{Huang_20_01, ABCDPlace}.
%To the best knowledge of authors,
%no existing models can express end-to-end task parallelism
%with general control flow.

\begin{figure}[h]
  \centering
  \centerline{\includegraphics[width=1.\columnwidth]{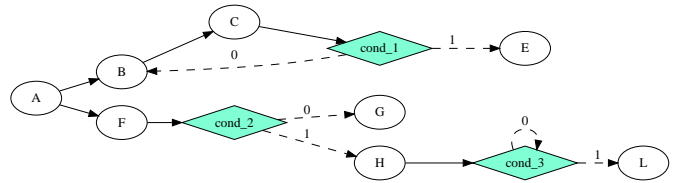}}
  \caption{A Taskflow graph of parallel control-flow blocks using three condition tasks.}
  \label{fig::conditional-tasking-complex}
\end{figure}

\subsection{Heterogeneous Tasking}

We introduce a new heterogeneous task graph programming model
by leveraging C++ closure and emerging GPU task graph acceleration, 
\textit{CUDA Graph}~\cite{CUDAGraph}.
Figure \ref{fig::saxpy} and Listing \ref{listing::saxpy} show the
canonical CPU-GPU saxpy (A•X plus Y) workload
and its implementation using our model.
Our model lets users describe a GPU workload in a \textit{task graph} called
\textit{cudaFlow} rather than aggregated GPU operations using explicit
CUDA streams and events.
A cudaFlow lives inside a closure and defines methods for constructing a GPU task graph.
In this example,
we define two parallel CPU tasks (\texttt{allocate\_x}, \texttt{allocate\_y})
to allocate unified shared memory (\texttt{cudaMallocManaged})
and one \texttt{cudaFlow} task to spawn a GPU task graph consisting
of two host-to-device (H2D) transfer tasks
(\texttt{h2d\_x}, \texttt{h2d\_y}), 
one saxpy kernel task (\texttt{kernel}), 
and two device-to-host (D2H) transfer tasks 
(\texttt{d2h\_x}, \texttt{d2h\_y}),
in this order of task dependencies.
Task dependencies are established through \texttt{precede} or \texttt{succeed}.
Apparently, \texttt{cudaFlow} must run after
\texttt{allocate\_x} and \texttt{allocate\_y}.
We emplace this cudaFlow on GPU 1 (\texttt{emplace\_on}).
When defining cudaFlows on specific GPUs,
users are responsible for ensuring all involved memory operations
stay in valid GPU contexts.

\begin{figure}[h]
  \centering
  \centerline{\includegraphics[width=1.\columnwidth]{}}
  %\centerline{\includegraphics[width=1.\columnwidth]{}}
  \caption{A saxpy (``single-precision A·X plus Y") task graph using two CPU tasks and one cudaFlow task.}
  %Executing the cudaFlow task (purple folder)
  %spawns another child graph (purple box) of 
  %one saxpy kernel and four data transfer tasks.}
  \label{fig::saxpy}
\end{figure}

\begin{lstlisting}[language=C++,label=listing::saxpy,caption={Taskflow program of Figure \ref{fig::saxpy}.}]
__global__ void saxpy(int n,int a,float *x,float *y);

const unsigned N = 1<<20;
std::vector<float> hx(N, 1.0f), hy(N, 2.0f);
float *dx{nullptr}, *dy{nullptr};

auto [allocate_x, allocate_y] = taskflow.emplace(
  [&](){ cudaMallocManaged(&dx, N*sizeof(float));}
  [&](){ cudaMallocManaged(&dy, N*sizeof(float));}
);
auto cudaFlow = taskflow.emplace_on(
  [&](tf::cudaFlow& cf) {
    auto h2d_x = cf.copy(dx, hx.data(), N);
    auto h2d_y = cf.copy(dy, hy.data(), N);
    auto d2h_x = cf.copy(hx.data(), dx, N);
    auto d2h_y = cf.copy(hy.data(), dy, N);
    auto kernel = cf.kernel(
      GRID, BLOCK, SHM, saxpy, N, 2.0f, dx, dy
    );
    kernel.succeed(h2d_x, h2d_y)
          .precede(d2h_x, d2h_y);
  }, 1
);
cudaFlow.succeed(allocate_x, allocate_y);
\end{lstlisting}

%\textbf{Key advantages:}
%Compared with existing GPU programming models
%models that focus on aggregated operations~\cite{OmpSs, StarPU, PaRSEC, Kokkos, XKAAPI, OpenMPC},
Our cudaFlow has the three key motivations.
First, users focus on the graph-level expression of dependent GPU operations
without wrangling with low-level streams.
They can easily visualize the graph by Taskflow to reduce turnaround time.
Second, our closure \textit{forces} users to express their intention
on what data storage mechanism should be used for each captured variable.
For example, Listing \ref{listing::saxpy} captures all data
(e.g., \texttt{hx}, \texttt{dx}) in \textit{reference} 
to form a \textit{stateful closure}.
When \texttt{allocate\_x} and \texttt{allocate\_y} finish,
the cudaFlow closure can access the correct state of \texttt{dx} and \texttt{dy}.
This property is very important for heterogeneous graph parallelism
because CPU and GPU tasks need to share states of data to collaborate with each other.
Our model makes it easy and efficient to capture data regardless of its scope.
%The task graph marshals data
%exchange between CPU-GPU dependent tasks. 
Third, by abstracting GPU operations to a task graph closure, 
we judiciously hide implementation details for portable optimization.
By default, a cudaFlow maps to a CUDA graph that can be executed 
using a single CPU call.
On a platform that does not support CUDA Graph,
we fall back to a stream-based execution.
%for instance, the newest CUDA Graph~\cite{CUDAGraph} (our default) and
%SYCL executor~\cite{SYCL}.
%Third, 
%our model is extensible to a new accelerator type by defining 
%a separate closure of a different argument type.
%Our runtime can autonomously detect the new device domain
%and allocate scheduling resources accordingly.
%

%Notice that
%we give users the full privilege to control GPU memory and decide its GPU context, 
%rather than developing another data abstractions~\cite{SYCL, StarPU, XKAAPI}
%that often restrict users from optimizing GPU memory.
Taskflow does not dynamically choose whether to execute tasks on CPU or GPU, 
and does not manage GPU data with another abstraction.
This is a software decision we have made when designing cudaFlow
based on our experience in parallelizing CAD using existing TGCSs. 
While it is always interesting to see what abstraction is best suited for which application, 
in our field, developing high-performance CAD algorithms requires many custom
efforts on optimizing the memory and data layouts~\cite{Huang_20_02, Huang_21_03}. 
Developers tend to do this statically in their own hands, 
such as direct control over raw pointers and explicit memory placement on a GPU,
while leaving tedious details of runtime load balancing to a dynamic scheduler. 
After years of
research, we have concluded to not abstract memory or data because they are 
application-dependent. 
This decision allows Taskflow to be \textit{framework-neutral} while enabling application
code to take full advantage of native or low-level GPU programming toolkits.
%This decision is indeed based on our years of experiences
%in parallelizing CAD applications with existing TCGSs.
%Specifically, developing optimized GPU kernels for CAD applications
%requires specially designed memory layouts~\cite{Huang_20_02, Huang_21_03},
%and CAD developers prefer and often need to manage memory directly
%using raw pointers.
%Data abstractions thus become unnecessary overheads.
%
%For the same reason, we do not simplify kernel programming,
%such as~\cite{OmpSs, OpenACC, SYCL},
%but focus on heterogeneous tasking that affects performance to a large extent.
%
%This organization makes our cudaFlow \textit{framework-neutral} 
%while enabling application code to take full advantage of native CUDA programming toolkits.

\begin{lstlisting}[language=C++,label=listing::saxpy-capturer,caption={Taskflow program of Figure \ref{fig::saxpy} using a capturer.}]
taskflow.emplace_on([&](tf::cudaFlowCapturer& cfc) {
  auto h2d_x = cfc.copy(dx, hx.data(), N);
  auto h2d_y = cfc.copy(dy, hy.data(), N);
  auto d2h_x = cfc.copy(hx.data(), dx, N);
  auto d2h_y = cfc.copy(hy.data(), dy, N);
  auto kernel = cfc.on([&](cudaStream_t s){
    invoke_3rdparty_saxpy_kernel(s);
  });
  kernel.succeed(h2d_x, h2d_y)
        .precede(d2h_x, d2h_y);
}, 1);
\end{lstlisting}

Constructing a GPU task graph using cudaFlow requires all kernel parameters
are known in advance.
However, third-party applications, such as cuDNN and cuBLAS,
do not open these details but provide an API for users to invoke hidden kernels
through custom streams.
The burden is on users to decide a stream layout and witness its
concurrency across dependent GPU tasks.
To deal with this problem,
we design a cudaFlow \textit{capturer} to capture GPU tasks
from existing stream-based APIs.
Listing \ref{listing::saxpy-capturer} outlines an implementation
of the same saxpy task graph
in Figure \ref{fig::saxpy} using a cudaFlow capturer,
assuming the saxpy kernel is only invocable through a stream-based API.

%Our model supports recursive capturing to allow structured 
%design of large
%which allows structured design
%Users can create cudaFlow capturers from a parent cudaFlow,
%and each of these capturers form a child graph embedded 
%in a GPU task of the cudaFlow.
%
%\begin{lstlisting}[language=C++,label=listing::saxpy-capturer,caption={Taskflow program of Figure \ref{fig::saxpy} using the capturer model.}]
%taskflow.emplace_on([&](tf::cudaFlow& cf) {
%  auto child = cf.capture([](tf::cudaFlowCapturer& cfc){
%    auto kernel = cfc.capture([&](cudaStream_t s){
%      invoke_3rdparty_kernel(s);
%    });
%  });
%  child.precede(/* other GPU tasks */);
%}, 1);
%\end{lstlisting}

Both cudaFlow and cudaFlow capturer can work seamlessly with condition tasks.
Control-flow decisions frequently happen at the boundary
between CPU and GPU tasks.
For example, a heterogeneous $k$-means algorithm iteratively
uses GPU to accelerate the finding of $k$ centroids and then
uses CPU to check if the newly found centroids converge to application rules.
Taskflow enables an end-to-end expression of such a workload in a single graph entity,
as shown in Figure \ref{fig::conditional-tasking-kmeans}
and Listing \ref{listing::conditional-tasking-kmeans}.
This capability largely improves the efficiency of modeling complex CPU-GPU workloads,
and our scheduler can dynamically overlap CPU and GPU tasks
across different control-flow blocks.

\begin{figure}[h]
  \centering
  \centerline{\includegraphics[width=1.\columnwidth]{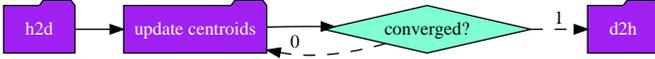}}
  \caption{A cyclic task graph using three cudaFlow tasks and one condition task to model 
  an iterative $k$-means algorithm.}
  \label{fig::conditional-tasking-kmeans}
\end{figure}

\begin{lstlisting}[language=C++,label=listing::conditional-tasking-kmeans,caption={Taskflow program of Figure \ref{fig::conditional-tasking-kmeans}.}]
auto [h2d, update, cond, d2h] = taskflow.emplace(
  [&](tf::cudaFlow& cf){ /* copy input to GPU */ },
  [&](tf::cudaFlow& cf){ /* update kernel */ },
  [&](){ return converged() ? 1 : 0; },
  [&](tf::cudaFlow& cf){ /* copy result to CPU */ }
);
h2d.precede(update);
update.precede(cond);
cond.precede(update, d2h);
\end{lstlisting}

\section{Taskflow System Runtime}

Taskflow enables users to express CPU-GPU dependent tasks
that integrate control flow into an HTDG.
To support our model with high performance,
we design the system runtime at two scheduling levels,
\textit{task level} and \textit{worker level}.
The goal of task-level scheduling is to (1) devise a feasible, efficient
execution for in-graph control flow and
(2) transform each GPU task into a runnable instance on a GPU.
The goal of worker-level scheduling is to 
optimize the execution performance by dynamically balancing
the worker count with task parallelism.

\subsection{Task-level Scheduling Algorithm}

\subsubsection{Scheduling Condition Tasks}
Conditional tasking is powerful but challenging to schedule.
Specifically, 
we must deal with conditional dependency and cyclic execution 
without encountering \textit{task race}, 
i.e., only one thread can touch a task at a time.
More importantly,
we need to let users easily understand our task scheduling flow
such that they can infer if a written task graph is properly
conditioned and schedulable.
To accommodate these challenges, 
we separate the execution logic 
between condition tasks and other tasks using two dependency notations, 
\textit{weak dependency} (out of condition tasks) and 
\textit{strong dependency} (other else).
For example, the six dashed arrows in Figure \ref{fig::conditional-tasking}
are weak dependencies
and the solid arrow \texttt{init}$\rightarrow$\texttt{F1}
is a strong dependency.
Based on these notations,
we design a simple and efficient algorithm for scheduling tasks,
as depicted in Figure \ref{fig::task-scheduling}.
%We have co-designed the task-level scheduling with our conditional tasking model
%to keep the execution logic simple and efficient, 
%such that users can leverage it to quickly infer possible errors in their graphs.
%It goes as follows:
When the scheduler receives an HTDG,
it (1) starts with tasks of \textit{zero} dependencies (both strong and weak) 
%? add algorithm reference later?
and continues executing tasks whenever \textit{strong} remaining dependencies are met,
or (2) skips this rule for weak dependency and 
directly jumps to the task indexed by the return of that condition task.
%By removing the scheduling part of weak dependency, 
%our algorithm falls back to DAG scheduling (marked in gray).
%
%The simplicity of this execution logic allows users
%to quickly infer possible errors of using condition tasks in their graphs.

\begin{figure}[h]
  \centering
  \centerline{\includegraphics[width=.8\columnwidth]{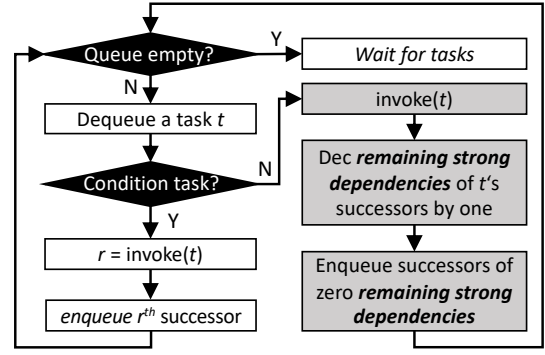}}
  \caption{Flowchart of our task scheduling.}
  \label{fig::task-scheduling}
\end{figure}

%Figure \ref{fig::task-scheduling} shows two common pitfalls of 
%conditional tasking that fail to work under on our algorithm.
%The first example of has no tasks for the scheduler to start with,
%because our algorithm must begin with tasks of zero strong and weak dependencies.
%A simple fix is to add a source task of no inputs.
%The second example may race on \texttt{D}, if the conditional of \texttt{C} returns zero 
%at the same time \texttt{E} finishes.
%A fix is to partition the control flow at \texttt{C} and \texttt{D} 
%with an auxiliary node \texttt{X}
%such that \texttt{D} is strongly conditioned by \texttt{E} and \texttt{X}.
%The second example may be feasible if \texttt{E} implies \texttt{F} (i.e., disjoint paths)
%by application-level algorithms.

Taking Figure \ref{fig::conditional-tasking} for example,
the scheduler starts with \texttt{init} (zero weak and strong dependencies)
and proceeds to \texttt{F1}.
Assuming \texttt{F1} returns \texttt{0}, 
the scheduler proceeds to its first successor, \texttt{F2}.
Now, assuming \texttt{F2} returns \texttt{1}, 
the scheduler proceeds to its second successor, \texttt{F1},
which forms a cyclic execution and so forth.
With this concept, the scheduler will cease at \texttt{stop} when \texttt{F1},
\texttt{F2}, and \texttt{F3} all return \texttt{0}.
Based on this scheduling algorithm,
users can quickly infer whether their task graph defines correct control flow.
For instance, adding a strong dependency from \texttt{init} to \texttt{F2} 
may cause task race on \texttt{F2}, due to two execution paths,
\texttt{init}$\rightarrow$\texttt{F2} and \texttt{init}$\rightarrow$\texttt{F1}$\rightarrow$\texttt{F2}.

Figure \ref{fig::conditional-tasking-pitfall} shows 
two common pitfalls of conditional tasking,
based on our task-level scheduling logic.
The first example has no source for the scheduler to start with.
A simple fix is to add a task \texttt{S} of zero dependencies.
The second example may race on \texttt{D}, if \texttt{C} returns 0
at the same time \texttt{E} finishes.
A fix is to partition the control flow at \texttt{C} and \texttt{D}
with an auxiliary node \texttt{X}
such that \texttt{D} is strongly conditioned by \texttt{E} and \texttt{X}.

\begin{figure}[h]
  \centering
  \centerline{\includegraphics[width=1.\columnwidth]{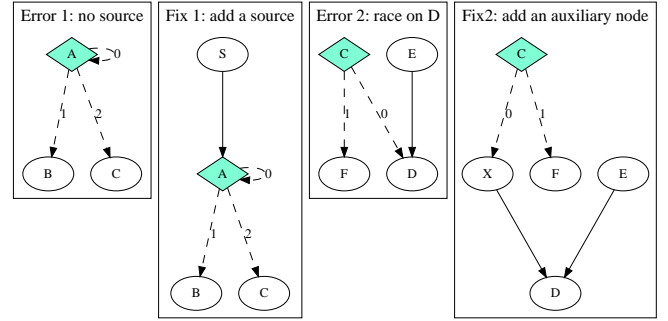}}
  \caption{Common pitfalls of conditional tasking.}
  \label{fig::conditional-tasking-pitfall}
\end{figure}

\subsubsection{Scheduling GPU Tasks}

We leverage modern \textit{CUDA Graph}~\cite{CUDAGraph}
to schedule GPU tasks.
CUDA graph is a new asynchronous task graph programming model 
introduced in CUDA 10
to enable more efficient launch and execution of GPU work than streams.
There are two types of GPU tasks, cudaFlow and cudaFlow capturer.
For each scheduled cudaFlow task,
since we know all the operation parameters,
we construct a CUDA graph that maps each task in the cudaFlow,
such as copy and kernel,
and each dependency to a node and an edge in the CUDA graph.
Then, we submit it to the CUDA runtime for execution.
This organization is simple and efficient, 
especially under modern GPU architectures (e.g., Nvidia Ampere) 
that support hardware-level acceleration for graph parallelism.

On the other hand,
for each scheduled cudaFlow capturer task,
our runtime transforms the captured GPU tasks and dependencies 
into a CUDA graph using \textit{stream capture}~\cite{CUDAGraph}.
The objective is to decide a stream layout optimized for kernel concurrency
without breaking task dependencies.
We design a greedy round-robin algorithm 
to transform a cudaFlow capturer to a CUDA graph,
as shown in Algorithm \ref{alg::optimize}.
Our algorithm starts by \textit{levelizing} the capturer graph into a 
two-level array of tasks in their topological orders.
Tasks at the same level can run simultaneously.
However, assigning each independent task here a unique stream does not
produce decent performance, because GPU has a limit on
the maximum kernel concurrency (e.g., 32 for RTX 2080).
We give this constraint to users as a tunable parameter, $max\_streams$.
We assign each levelized task an $id$ equal to its index in the array at its level.
%Then,
%we can quickly assign each task a stream using the round-robin arithmetic,
%$id \mod max\_streams$, and complete the assignment level by level.
%Since tasks at different levels have dependencies
%we record an event at the and wait on the event on each side
%of a dependency.
%
Then,
we can quickly assign each task a stream using the round-robin arithmetic
(line \ref{alg::optimize::modulo}).
Since tasks at different levels have dependencies,
we need to record an event 
(lines \ref{alg::optimize::recordloopbeg}:\ref{alg::optimize::recordloopend}) 
and wait on the event
(lines \ref{alg::optimize::waitloopbeg}:\ref{alg::optimize::waitloopend})
from both sides of a dependency,
saved for those issued in the same stream
(line \ref{alg::optimize::waitloopexcept} and 
line \ref{alg::optimize::recordloopexcept}).
%While there is room for further optimization,
%our experiments show the proposed greedy algorithm achieves
%decent performance compared to explicit CUDA graph execution.
%Also, the greedy algorithm can be easily extended to other 
%GPU programming frameworks,
%for instance, SYCL and oneAPI in-order queues~\cite{SYCL},
%in which we are extending a syclFlow interface
%with our industrial partners.

\begin{algorithm}[h]
 \KwIn{a cudaFlow capturer $C$}
 \KwOut{a transformed CUDA graph $G$}
 \SetKw{KwBreak}{break}
 \SetKw{KwTrue}{true}
 \SetKw{KwNot}{not}
 \SetKw{KwOr}{or}
 \SetKwRepeat{Do}{do}{while}
 \BlankLine
 $S \leftarrow$ get\_capture\_mode\_streams($max\_streams$)\;
 $L \leftarrow$ levelize($C$)\;
 $l \leftarrow L.min\_level$\;
 \While{$l <= L.max\_level$} {  \label{alg::optimize::level_loop}
   \ForEach{$t \in$ L.get\_tasks(l)}{ 
     $s \leftarrow (t.id \mod max\_streams) $\; \label{alg::optimize::modulo}
     \ForEach{$p \in t.predecessors$} { \label{alg::optimize::waitloopbeg}
       \If{$s \neq (p.id \mod max\_streams)$} { \label{alg::optimize::waitloopexcept}
         stream\_wait\_event($S[s]$, $p.event$)\;
       }
     } \label{alg::optimize::waitloopend}
     stream\_capture($t$, $S[s]$)\;
     \ForEach{$n \in t.successors$} { \label{alg::optimize::recordloopbeg}
       \If{$s \neq (n.id \mod max\_streams)$} { \label{alg::optimize::recordloopexcept}
         stream\_record\_event($S[s]$, $p.event$)\;
       }
     } \label{alg::optimize::recordloopend}
   }
 }
 $G \leftarrow$ end\_capture\_mode\_streams($S$)\;
 \Return $G$\;
 \caption{make\_graph($G$)}
 \label{alg::optimize}
\end{algorithm}

\begin{figure}[!h]
  \centering
  \includegraphics[width=1.\columnwidth]{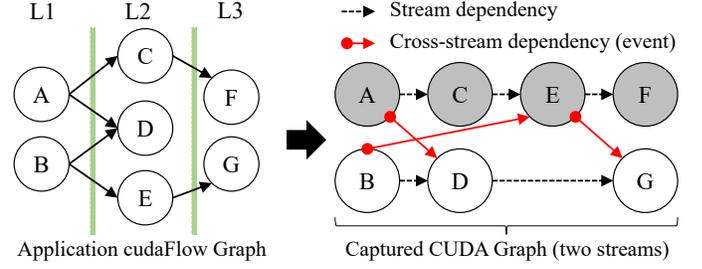}
  \caption{Illustration of Algorithm \ref{alg::optimize} on transforming an application cudaFlow capturer graph into a native CUDA graph using two streams.}
  \label{fig::capturer}
\end{figure}

Figure \ref{fig::capturer} gives an example of transforming a user-given cudaFlow capturer graph
into a native CUDA graph using two streams (i.e., $max\_stream = 2$) for execution.
The algorithm first levelizes the graph by performing a topological traversal
and assign each node an id equal to its index at the level.
For example, \texttt{A} and \texttt{B} are assigned 0 and 1,
\texttt{C}, \texttt{D}, and \texttt{E} are assigned 0, 1, and 2, and so on.
These ids are used to quickly determine the mapping between a stream and a node
in our round-robin loop,
because CUDA stream only allows inserting events from the latest node in the queue.
For instance,
when \texttt{A} and \texttt{B} are assigned to stream 0 (upper row) and stream 1 (lower row)
during the level-by-level traversal (line \ref{alg::optimize::level_loop} of
Algorithm \ref{alg::optimize}),
we can determine ahead of the stream numbers of their successors 
and find out the two cross-stream dependencies, \texttt{A}$\rightarrow$\texttt{D} 
and \texttt{B}$\rightarrow$\texttt{E}, that need recording events.
Similarly, we can wait on recorded events by scanning the predecessors
of each node to find out cross-stream event dependencies.

\subsection{Worker-level Scheduling Algorithm}

At the worker level,
we leverage \textit{work stealing} to execute submitted tasks with dynamic load balancing.
Work stealing has been extensively studied in multicore 
programming~\cite{TBB, Nabbit, A-STEAL, ABP, BWS, Cilk++, TPL, EWS, X10},
but an efficient counterpart for hybrid CPU-GPU or more general heterogeneous systems 
remains demanding.
This is a challenging research topic, especially under Taskflow's HTDG model.
When executing an HTDG,
a CPU task can submit both CPU and GPU tasks and vice versa 
whenever dependencies are met.
The available task parallelism changes dynamically,
and there are no ways to predict the next coming tasks under dynamic control flow.
To achieve good system performance,
the scheduler must balance the number of worker threads
with dynamically generated tasks 
to control the number of \textit{wasteful steals} 
because the wasted resources should have been used 
by useful workers or other concurrent programs~\cite{A-STEAL, BWS}.

Keeping workers busy in awaiting tasks with a \textit{yielding} mechanism 
is a commonly used work-stealing framework~\cite{ABP, StarPU, XKAAPI}.
However, this approach is not cost-efficient, 
because it can easily over-subscribe resources when tasks become scarce, 
especially around the decision-making points of control flow.
The sleep-based mechanism is another way 
to suspend the workers frequently failing in steal attempts.
A worker is put into sleep by waiting for a condition variable 
to become true.
When the worker sleeps, 
OS can grant resources to other workers for running useful jobs.
Also, reducing wasteful steals can improve both
the inter-operability of a concurrent program and 
the overall system performance, including latency, throughput, and energy efficiency
to a large extent~\cite{BWS}.
Nevertheless, deciding 
\textit{when and how to put workers to sleep,
wake up workers to run, and balance the numbers of workers with
dynamic task parallelism}
is notoriously challenging to design correctly and implement efficiently.

Our previous work~\cite{Lin_20_01} has introduced an adaptive
work-stealing algorithm to address a similar line of the challenge 
yet in a CPU-only environment
by maintaining a loop invariant between active and idle workers.
However,
extending this algorithm to a heterogeneous target is not easy,
because we need to consider the adaptiveness in different heterogeneous domains
and bound the total number of wasteful steals across all domains 
at any time of the execution.
To overcome this challenge,
we introduce a new scheduler architecture
and an adaptive worker management algorithm 
that are both generalizable to arbitrary heterogeneous domains.
We shall prove the proposed work-stealing algorithm can deliver a strong upper bound
on the number of wasteful steals at any time during the execution.

\subsubsection{Heterogeneous Work-stealing Architecture}

At the architecture level, our scheduler maintains a set of workers for each task domain
(e.g., CPU, GPU).
A worker can only steal tasks of the same domain from others.
Figure \ref{fig::architecture} shows the architecture of our work-stealing scheduler 
on two domains, CPU and GPU.
By default, the number of domain workers equals the number of domain devices
(e.g., CPU cores, GPUs).
We associate each worker with two separate task queues,
a CPU task queue (CTQ) and a GPU task queue (GTQ),
and declare a pair of CTQ and GTQ shared by all workers.
The shared CTQ and GTQ pertain to the scheduler
and are primarily used for external threads to submit HTDGs.
A CPU worker can push and pop a new task into and from 
its local CTQ, 
and can steal tasks from all the other CTQs;
the structure is symmetric to GPU workers.
%A GPU worker can push and pop a new task into and from 
%its local GTQ,
%and can steal tasks from all the other GTQs.
This separation allows a worker to quickly 
insert dynamically generated tasks to their corresponding queues
without contending with other workers.

\begin{figure}[h]
  \centering
  \centerline{\includegraphics[width=1.\columnwidth]{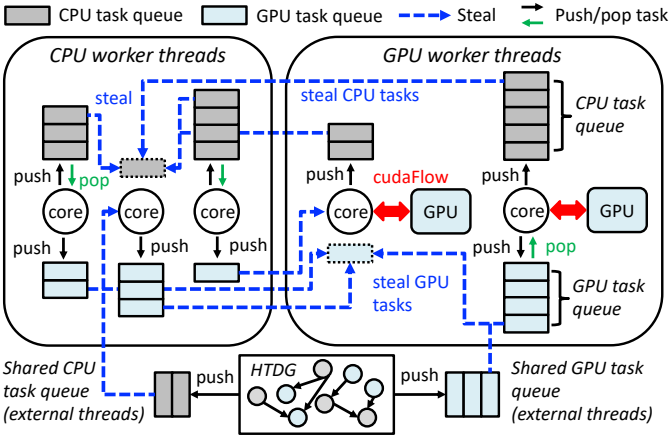}}
  \caption{Architecture of our work-stealing scheduler on two domains, CPU and GPU.}
  \label{fig::architecture}
\end{figure}

We leverage two existing concurrent data structures,
\textit{work-stealing queue} and \textit{event notifier},
to support our scheduling architecture.
We implemented the task queue based on the lock-free algorithm
proposed by~\cite{WSQ}.
Only the queue owner can pop/push a task from/into one end of the queue,
while multiple threads can steal a task from the other end at the same time.
%The task queue supports:
%(1) a \texttt{pop} method to extract an item from the bottom of the queue,
%(2) a \texttt{push} method to insert an item into the bottom of the queue, and
%(3) a \texttt{steal} method to extract an item from the top of the queue.
%Only one thread (usually the owner of the queue)
%can call \texttt{pop} or \texttt{push}, 
%while multiple threads can touch \texttt{steal} at the same time.
Event notifier is a two-phase commit protocol (2PC) that
allows a worker to wait on a binary predicate 
in a \textit{non-blocking} fashion~\cite{2PC}.
The idea is similar to the 2PC in distributed systems and computer networking.
The waiting worker first checks the predicate and calls \texttt{prepare\_wait} 
if it evaluates to false.
The waiting worker then checks the predicate again and calls \texttt{commit\_wait}
to wait, if the outcome remains false, 
or \texttt{cancel\_wait} to cancel the request.
Reversely, the notifying worker changes the predicate to true
and call \texttt{notify\_one} or \texttt{notify\_all}
to wake up one or all waiting workers.
Event notifier is particularly useful for our scheduler architecture
because we can keep notification between workers non-blocking.
We develop one event notifier for each domain, 
based on Dekker's algorithm by~\cite{2PC}.

\subsubsection{Heterogeneous Work-stealing Algorithm}
\label{sec::HeterogeneousWorkStealingAlgorithm}

Atop this architecture, 
we devise an efficient algorithm to \textit{adapt} the number of active workers 
to dynamically generated tasks such that
threads are not underutilized when tasks are abundant nor
overly subscribed when tasks are scarce.
%Our algorithm can effectively prevent threads from being underutilized and oversubscribed,
%thereby improving the overall system performance to a large degree.
Our adaptiveness is different from existing frameworks, 
such as constant wake-ups~\cite{TBB, BWS},
data locality~\cite{LAWS, DeterministicWS},
and watchdogs~\cite{BWS}.
Instead, we extend our previous work~\cite{Lin_20_01} to keep a \textit{per-domain invariant} 
to control the numbers of thieves and,
consequently, wasteful steals
based on the active worker count: 
\textit{When an active worker exists, we keep at least one worker making steal attempts unless all workers are active.}

Unlike the CPU-only scheduling environment in~\cite{Lin_20_01}, 
the challenge to keep this invariant
in a heterogeneous target comes from the heterogeneously dependent tasks and
cross-domain worker notifications,
as a CPU task can spawn a GPU task and vice versa.
Our scheduler architecture is particularly designed to tackle this challenge
by separating decision controls to a per-domain basis.
This design allows us to realize the invariant via an adaptive strategy--\textit{the 
last thief to become active will wake up
a worker in the same domain to take over its thief role, and so forth}.
External threads (non-workers) submit tasks through the 
shared task queues and wake up workers to run tasks.

%
%Our scheduler design is object-oriented. 
%The scheduler lives in an \textit{executor} object that manages a set of workers
%per domain and other tasking details.

%%% worker loop
\begin{algorithm}[h]
 \SetKw{KwNIL}{NIL}
 \SetKwInput{KwPerWorkerGlobal}{Per-worker global}
 \KwIn{$w$: a worker}
 \KwPerWorkerGlobal{$t$: a task (initialized to \KwNIL)}
 \SetKw{KwBreak}{break}
 \While{true} {
   exploit\_task($w$, $t$)\;
   \If{wait\_for\_task(w, t) == false} {  \label{alg::worker_loop::wait_for_task}
     \KwBreak\;
   }
 }
 \caption{worker\_loop($w$)}
 \label{alg::worker_loop}
\end{algorithm}

%%% exploit_task
\begin{algorithm}[h]
 \SetKwInput{KwPerWorkerGlobal}{Per-worker global}
 \KwIn{$w$: a worker (domain $d_w$)}
 \KwPerWorkerGlobal{$t$: a task}
 \SetKw{KwBreak}{break}
 \SetKw{KwTrue}{true}
 \SetKw{KwNot}{not}
 \SetKw{KwAnd}{and}
 \SetKw{KwOr}{or}
 \SetKw{KwNIL}{NIL}
 \SetKwRepeat{Do}{do}{while}
 \If{$t \neq$ \KwNIL} {
   \If{AtomInc(actives[$d_w$]) == 1 \KwAnd thieves[$d_w$] == 0} {  \label{alg::exploit_task::notify_beg}
     $notifier[d_w]$.notify\_one()\;
   }  \label{alg::exploit_task::notify_end}
   \Do{t $\neq$ \KwNIL} {  \label{alg::exploit_task::drain_beg}
     execute\_task($w$, $t$)\;
     $t \leftarrow w.task\_queue[d_w]$.pop()\;
   }  \label{alg::exploit_task::drain_end}
   AtomDec($actives[d_w]$)\;  \label{alg::exploit_task::dec_actives}
 }
 \caption{exploit\_task($w$, $t$)}
 \label{alg::exploit_task}
\end{algorithm}

Our scheduling algorithm is symmetric by domain.
Upon spawned,
each worker enters the loop in Algorithm \ref{alg::worker_loop}.
Each worker has a per-worker global pointer $t$ to a task that is either 
stolen from others or popped out from the worker's local task queue
after initialization; the notation will be used in the rest of algorithms.
The loop iterates two functions,
\texttt{exploit\_task} and \texttt{wait\_for\_task}.
Algorithm \ref{alg::exploit_task} implements the function
\texttt{exploit\_task}.
We use two scheduler-level arrays of atomic variables, 
\texttt{actives} 
and \texttt{thieves},
to record for each domain the number of workers that 
are actively running tasks
and the number of workers that are making steal attempts, respectively.\footnote{While 
our pseudocodes use array notations of atomic variables for the sake of brevity,
the actual implementation considers padding to avoid false-sharing effects.
}
Our algorithm relies on these atomic variables
to decide when to put a worker to sleep 
for reducing resource waste
and 
when to bring back a worker for running new tasks.
Lines \ref{alg::exploit_task::notify_beg}:\ref{alg::exploit_task::notify_end}
implement our adaptive strategy using two lightweight atomic operations.
In our pseudocodes, the two atomic operations, \textit{AtomInc} and \textit{AtomDec},
return the results after incrementing and decrementing the values by one, respectively.
Notice that the order of these two comparisons matters
(i.e., active workers and then thieves),
as they are used to synchronize with other workers
in the later algorithms.
Lines \ref{alg::exploit_task::drain_beg}:\ref{alg::exploit_task::drain_end}
drain out the local task queue
and executes all the tasks using 
\texttt{execute\_task} in Algorithm \ref{alg::execute_task}.
Before leaving the function, the worker decrements
\texttt{actives} by one (line \ref{alg::exploit_task::dec_actives}).

%%% execute_task
\begin{algorithm}[h]
 \SetKwInput{KwPerWorkerGlobal}{Per-worker global}
 \KwIn{$w$: a worker}
 \KwPerWorkerGlobal{$t$: a task}
 %\KwIn{$v$: a visitor to the task variant}
 \SetKw{KwAnd}{and}
 \SetKw{KwTrue}{true}
 \SetKw{KwFalse}{false}

 $r \leftarrow$ invoke\_task\_callable($t$)\; \label{alg::execute_task::visitor}
 
 \If{r.has\_value()}{  \label{alg::execute_task::condition_task_beg}
   submit\_task($w$, $t.successors[r]$)\;
   \Return\;
 }  \label{alg::execute_task::condition_task_end}
 \ForEach{s $\in$ t.successors} {  \label{alg::execute_task::others_beg}
   \If{AtomDec(s.strong\_dependents) == 0} {
     submit\_task($w$, $s$)\;
   }
 }  \label{alg::execute_task::others_end}

 \caption{execute\_task($w$, $t$)}
 \label{alg::execute_task}
\end{algorithm}

%%% submit_task
\begin{algorithm}[h]
 \SetKwInput{KwPerWorkerGlobal}{Per-worker global}
 \KwIn{$w$: a worker (domain $d_w$)}
 \KwPerWorkerGlobal{$t$: a task (domain $d_t$)}
 \SetKw{KwAnd}{and}
 \SetKw{KwTrue}{true}
 \SetKw{KwFalse}{false}
 \SetKw{KwReturn}{return}

 $w.task\_queue[d_t]$.push($t$)\; \label{alg::submit_task::push_task}
   \If{$d_w != d_t$} { \label{alg::submit_task::domain_check}
     \If{actives[$d_t$] == 0 \KwAnd thieves[$d_t$] == 0} {  \label{alg::submit_task::notify_beg}
       $notifier[d_t]$.notify\_one()\;  \label{alg::submit_task::notify}
     }  \label{alg::submit_task::notify_end}
 }

 \caption{submit\_task($w$, $t$)}
 \label{alg::submit_task}
\end{algorithm}

Algorithm \ref{alg::execute_task} implements the function \texttt{execute\_task}.
We invoke the callable of the task 
(line \ref{alg::execute_task::visitor}).
If the task returns a value (i.e., a condition task),
we directly submit the task of the indexed successor 
(lines \ref{alg::execute_task::condition_task_beg}:\ref{alg::execute_task::condition_task_end}).
Otherwise, 
we remove the task dependency from all immediate successors and
submit new tasks of zero remaining strong dependencies 
(lines \ref{alg::execute_task::others_beg}:\ref{alg::execute_task::others_end}).
The detail of submitting a task
is shown in Algorithm \ref{alg::submit_task}.
The worker inserts the task into the queue of the corresponding domain
(line \ref{alg::submit_task::push_task}).
If the task does not belong to the worker's domain (line \ref{alg::submit_task::domain_check}),
the worker wakes up one worker from that domain
if there are no active workers or thieves 
(lines \ref{alg::submit_task::notify_beg}:\ref{alg::submit_task::notify_end}). 
The function \texttt{submit\_task} is internal to the workers
of a scheduler.
External threads never touch this call.

%%% wait_for_task
\begin{algorithm}[!h]
 \SetKwInput{KwPerWorkerGlobal}{Per-worker global}
 \KwIn{$w$: a worker (domain $d_w$)}
 \KwPerWorkerGlobal{$t$: a task}
 \KwOut{a boolean signal of stop}
 \SetKw{KwGoTo}{goto}
 \SetKw{KwAnd}{and}
 \SetKw{KwTrue}{true}
 \SetKw{KwFalse}{false}
 \SetKw{KwNIL}{NIL}

 AtomInc($thieves[d_w]$)\;

 explore\_task($w, t$)\;  \label{alg::wait_for_task::explore_task}

 \If{t $\neq$ \KwNIL} {  \label{alg::wait_for_task::thief_to_active_beg}
   \If{AtomDec(thieves[$d_w$]) == 0}{
     $notifier[d_w]$.notify\_one()\;
   }
   \Return \KwTrue\;
 }  \label{alg::wait_for_task::thief_to_active_end}

 $notifier[d_w]$.prepare\_wait($w$)\;  \label{alg::wait_for_task::prepare_wait}

 \If{task\_queue[$d_w$].empty() $\neq$ \KwTrue} {  \label{alg::wait_for_task::shared_empty_beg}
   $notifier[d_w]$.cancel\_wait($w$)\;   \label{alg::wait_for_task::cancel_wait}
   $t \leftarrow task\_queue[d_w]$.steal()\;  \label{alg::wait_for_task::steal_after_cancel_wait}
   \If{$t \neq \KwNIL$} {  \label{alg::wait_for_task::shared_empty_last_thief_beg}
     \If{AtomDec($thieves[d_w]$) == 0}{
       $notifier[d_w]$.notify\_one()\;
     }
     \Return \KwTrue\;
   }  \label{alg::wait_for_task::shared_empty_last_thief_end}
     %\KwGoTo ExploreCPUTask\;
   \KwGoTo Line 2\;  \label{alg::wait_for_task::goto_steal_from_shared_empty}
 }  \label{alg::wait_for_task::shared_empty_end}

 \If{stop == \KwTrue} {                 \label{alg::wait_for_task::stop_beg}
   $notifier[d_w]$.cancel\_wait($w$)\;
   \ForEach{domain $d \in D$} {
     $notifier[d]$.notify\_all()\;
   }
   AtomDec($thieves[d_w]$)\;
   \Return \KwFalse\;
 }                                      \label{alg::wait_for_task::stop_end}

 \If{AtomDec(thieves[$d_w$]) == 0} {  \label{alg::wait_for_task::last_thief_beg}
   \If{$actives[d_w] > 0$} {  \label{alg::wait_for_task::last_thief_check_actives_beg}
     $notifier[d_w]$.cancel\_wait($w$)\;
     %\KwGoTo wait\_for\_task\;
     \KwGoTo Line 1\;
   } \label{alg::wait_for_task::last_thief_check_actives_end}
   \ForEach{worker $x \in W$} {  \label{alg::wait_for_task::last_thief_check_other_workers_beg}
     \If{x.task\_queue[$d_w$].empty() $\neq$ \KwTrue} {
       $notifier[d_w]$.cancel\_wait($w$)\;
       %\KwGoTo WaitForCPUTask\;
       \KwGoTo Line 1\;
     }
   } \label{alg::wait_for_task::last_thief_check_other_workers_end}
 }  \label{alg::wait_for_task::last_thief_end}
 $notifier[d_w]$.commit\_wait($w$)\; \label{alg::wait_for_task::commit_wait}
 \Return \KwTrue\; \label{alg::wait_for_task::return}
 \caption{wait\_for\_task($w$, $t$)}
 \label{alg::wait_for_task}
\end{algorithm}

%So far, we have completed \texttt{exploit\_task}
%(line 3 in Algorithm \ref{alg::worker_loop}).
When a worker completes all tasks in its local queue,
it proceeds to \texttt{wait\_for\_task}
(line \ref{alg::worker_loop::wait_for_task} in Algorithm \ref{alg::worker_loop}),
as shown in Algorithm \ref{alg::wait_for_task}.
At first, the worker enters \texttt{explore\_task} to make steal attempts 
(line \ref{alg::wait_for_task::explore_task}).
When the worker steals a task and
it is the last thief,
it notifies a worker of the same domain to take over its thief role
and returns to an active worker 
(lines \ref{alg::wait_for_task::thief_to_active_beg}:\ref{alg::wait_for_task::thief_to_active_end}).
Otherwise, 
the worker becomes a \textit{sleep candidate}.
However, we must avoid underutilized parallelism, 
since new tasks may come at the time we put a worker to sleep.
We use 2PC to adapt the number
of active workers to available task parallelism 
(lines \ref{alg::wait_for_task::prepare_wait}:\ref{alg::wait_for_task::commit_wait}).
The predicate of our 2PC is 
\textit{at least one task queue, both local and shared, in the worker's domain is nonempty}.
At line \ref{alg::wait_for_task::thief_to_active_end},
the worker has drained out its local queue and devoted much effort to stealing tasks.
Other task queues in the same domain are most likely to be empty.
We put this worker to a sleep candidate by 
submitting a wait request (line \ref{alg::wait_for_task::prepare_wait}).
From now on, \textit{all the notifications from other workers
will be visible to at least one worker, including this worker}.
That is, if another worker call \texttt{notify} at this moment,
the 2PC guarantees one worker within the scope of 
lines \ref{alg::wait_for_task::prepare_wait}:\ref{alg::wait_for_task::commit_wait}
will be notified (i.e., line \ref{alg::wait_for_task::return}).
Then, we inspect our predicate by examining the shared task queue again 
(lines \ref{alg::wait_for_task::shared_empty_beg}:\ref{alg::wait_for_task::shared_empty_end}),
since external threads might have inserted tasks at the same time we call \texttt{prepare\_wait}.
If the shared queue is nonempty (line \ref{alg::wait_for_task::shared_empty_beg}),  
the worker cancels the wait request and makes an immediate steal attempt
at the queue (lines \ref{alg::wait_for_task::cancel_wait}:\ref{alg::wait_for_task::steal_after_cancel_wait});
if the steal succeeds and it is the last thief,
the worker goes active and notifies a worker 
(lines \ref{alg::wait_for_task::shared_empty_last_thief_beg}:\ref{alg::wait_for_task::shared_empty_last_thief_end}),
or otherwise enters the steal loop again (line \ref{alg::wait_for_task::goto_steal_from_shared_empty}).
If the shared queue is empty (line \ref{alg::wait_for_task::shared_empty_end}),
the worker checks whether the scheduler received a stop signal from the executor
due to exception or task cancellation, 
and notifies all workers to leave 
(lines \ref{alg::wait_for_task::stop_beg}:\ref{alg::wait_for_task::stop_end}).
Now, the worker is almost ready to sleep except if it is the last thief and:
(1) an active worker in its domain exists 
(lines \ref{alg::wait_for_task::last_thief_check_actives_beg}:\ref{alg::wait_for_task::last_thief_check_actives_end}) or
(2) at least one task queue of the same domain from other workers is nonempty 
(lines \ref{alg::wait_for_task::last_thief_check_other_workers_beg}:\ref{alg::wait_for_task::last_thief_check_other_workers_end}).
The two conditions may happen because a task can spawn tasks of different domains
and trigger the scheduler to notify the corresponding domain workers.
Our 2PC guarantees the two conditions 
synchronize with lines \ref{alg::exploit_task::notify_beg}:\ref{alg::exploit_task::notify_end} in Algorithm \ref{alg::exploit_task}
and lines \ref{alg::submit_task::notify_beg}:\ref{alg::submit_task::notify_end} 
in Algorithm \ref{alg::submit_task},
and vice versa,
preventing the problem of undetected task parallelism.
Passing all the above conditions,
the worker commits to wait on our predicate (line \ref{alg::wait_for_task::commit_wait}).

\begin{algorithm}[h]
 \SetKw{KwGoTo}{goto}
 \SetKw{KwBreak}{break}
 \SetKw{KwAnd}{and}
 \SetKw{KwTrue}{true}
 \SetKw{KwFalse}{false}
 \SetKw{KwNIL}{NIL}
 \SetKwInput{KwPerWorkerGlobal}{Per-worker global}
 \KwIn{$w$: a worker (a thief in domain $d_w$)}
 \KwPerWorkerGlobal{$t$: a task (initialized to \KwNIL)}

 $steals$ $\leftarrow$ $0$\;

 %\While{stop $\neq$ \KwTrue} {
 \While{t != \KwNIL \KwAnd ++$steals \leq MAX\_STEAL$} {
   yield()\;
   $t \leftarrow$ steal\_task\_from\_random\_victim($d_w$)\;
 }

 \caption{explore\_task($w$, $t$)}
 \label{alg::explore_task}
\end{algorithm}

%%% submit_graph
\begin{algorithm}[h]
 \KwIn{$g$: an HTDG to execute}
 \SetKw{KwAnd}{and}
 \SetKw{KwTrue}{true}
 \SetKw{KwFalse}{false}

 \ForEach{t $\in$ g.source\_tasks} {
   scoped\_lock $lock$($queue\_mutex$)\;
   $d_t \leftarrow t.domain$\;
   $task\_queue[d_t]$.push($t$)\;    \label{alg::submit_graph::push_task}
   $notifier[d_t]$.notify\_one()\;   \label{alg::submit_graph::notify}
 }

 \caption{submit\_graph($g$)}
 \label{alg::submit_graph}
\end{algorithm}

Algorithm \ref{alg::explore_task} implements
\texttt{explore\_task}, which resembles the normal work-stealing loop~\cite{ABP}.
At each iteration, the worker (thief) tries to steal a task
from a randomly selected victim, including the shared task queue,
in the same domain.
We use a parameter $MAX\_STEALS$ to control the number of iterations.
In our experiments, setting $MAX\_STEAL$ to ten times the number of all workers 
is sufficient enough for most applications.
Up to this time,
we have discussed the core work-stealing algorithm.
To submit an HTDG for execution,
we call \texttt{submit\_graph}, shown in Algorithm \ref{alg::submit_graph}.
The caller thread inserts all tasks of zero dependencies
(both strong and weak dependencies) to the shared task queues 
and notifies a worker of the corresponding domain 
(lines \ref{alg::submit_graph::push_task}:\ref{alg::submit_graph::notify}). 
Shared task queues may be accessed
by multiple callers
and are thus protected under a lock pertaining to the scheduler.
Our 2PC guarantees 
lines \ref{alg::submit_graph::push_task}:\ref{alg::submit_graph::notify} 
synchronizes with lines
\ref{alg::wait_for_task::shared_empty_beg}:\ref{alg::wait_for_task::shared_empty_end}
of Algorithm \ref{alg::wait_for_task} and vice versa,
preventing undetected parallelism in which all workers are sleeping.

\section{Analysis}

To justify the efficiency of our scheduling algorithm,
we draw the following theorems and give their proof sketches.

\begin{lemma}
\label{lemma::invariant}
For each domain, when an active worker (i.e., running a task) exists, 
at least one another worker is making steal attempts
unless all workers are active.
\end{lemma}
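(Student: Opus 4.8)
The invariant is a statement about the atomic counters \texttt{actives[$d$]} and \texttt{thieves[$d$]} for a fixed domain $d$, so I would prove it as a monitor-style invariant preserved by every atomic transition that touches those counters. First I would enumerate the code sites that modify \texttt{actives[$d$]} or \texttt{thieves[$d$]}: (i) the \texttt{AtomInc(actives)} / \texttt{AtomDec(actives)} pair in \texttt{exploit\_task} (Algorithm~\ref{alg::exploit_task}, lines~\ref{alg::exploit_task::notify_beg} and \ref{alg::exploit_task::dec_actives}); (ii) the \texttt{AtomInc(thieves)} at the top of \texttt{wait\_for\_task} and its several matching \texttt{AtomDec(thieves)} sites (Algorithm~\ref{alg::wait_for_task}); (iii) the companion \texttt{notifier[$d$].notify\_one()} calls guarded by the counter tests; and (iv) \texttt{submit\_task} / \texttt{submit\_graph}, which only call \texttt{notify\_one} without touching the counters. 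I would formalize the invariant as: \emph{if} \texttt{actives[$d$]}${}>0$ \emph{and} not all domain-$d$ workers are active, \emph{then} \texttt{thieves[$d$]}${}>0$ \emph{or} there is a pending notification on \texttt{notifier[$d$]} that will drive some non-active worker back into the steal loop (where it increments \texttt{thieves}).

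**Key steps in order.** (1) Establish the base case: before any task is submitted, every worker is either sleeping or in \texttt{explore\_task}; \texttt{actives}${}=0$, so the hypothesis is vacuously false and the invariant holds. (2) For the transition where a worker goes \emph{thief}${}\to${}\emph{active} (\texttt{exploit\_task} executes \texttt{AtomInc(actives)}): the guard on line~\ref{alg::exploit_task::notify_beg} is exactly ``\texttt{AtomInc(actives)==1 and thieves==0}'', i.e. this worker is the first active one and there are currently no thieves; in that case it fires \texttt{notify\_one}, which re-arms a sleeping worker that will re-enter the loop and bump \texttt{thieves}. Crucially I must argue the \emph{ordering} of the two tests (actives first, then thieves) together with the reverse ordering in \texttt{wait\_for\_task} (thief decrements \texttt{thieves}, then re-checks \texttt{actives} at line~\ref{alg::wait_for_task::last_thief_check_actives_beg}) prevents the race where the last thief and the only active worker simultaneously conclude "someone else will cover me" — this is the Dekker-style handshake the paper alludes to. (3) For the transition \emph{thief}${}\to${}\emph{active} inside \texttt{wait\_for\_task} (lines~\ref{alg::wait_for_task::thief_to_active_beg}--\ref{alg::wait_for_task::thief_to_active_end} and \ref{alg::wait_for_task::shared_empty_last_thief_beg}--\ref{alg::wait_for_task::shared_empty_last_thief_end}): the guard ``\texttt{AtomDec(thieves)==0}'' means this was the last thief, and it then issues \texttt{notify\_one} to appoint a successor thief — so \texttt{thieves} momentarily drops to $0$ but a notification is in flight, matching the disjunction in my invariant. (4) For the transition \emph{active}${}\to${}\emph{sleep candidate}${}\to${}\emph{asleep}: a worker only reaches \texttt{commit\_wait} (line~\ref{alg::wait_for_task::commit_wait}) after passing the checks at lines~\ref{alg::wait_for_task::last_thief_check_actives_beg}--\ref{alg::wait_for_task::last_thief_check_other_workers_end}, which verify that either it is not the last thief, or \texttt{actives}${}=0$ and all domain queues are empty; in every surviving case the invariant's hypothesis fails or a thief remains. (5) For \texttt{submit\_task} / \texttt{submit\_graph}: these can only turn the hypothesis true by populating a queue, but the guarded \texttt{notify\_one} (line~\ref{alg::submit_task::notify} / \ref{alg::submit_graph::notify}) simultaneously creates the in-flight notification the invariant permits; the 2PC property from the event-notifier section guarantees that notification is not lost against a concurrent \texttt{prepare\_wait}. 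Finally I would close the "in-flight notification" slack by appealing to the event-notifier's two-phase-commit guarantee: any \texttt{notify\_one} issued after a worker's \texttt{prepare\_wait} either wakes that worker or is consumed by a worker that has not yet committed, so the notification is eventually discharged into a \texttt{thieves} increment, restoring the clean form of the invariant.

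**Main obstacle.** The hard part is the interleaving argument in step~(2)–(3): showing that the ``actives-then-thieves'' test order in \texttt{exploit\_task} is a genuine Dekker handshake with the ``decrement-thieves-then-recheck-actives'' order in \texttt{wait\_for\_task}, so that it is impossible for the unique active worker to finish and sleep at the same instant the unique thief gives up, with neither having seen the other. I expect to handle this by a careful case analysis on the linearization order of the two atomic operations (the \texttt{AtomInc(actives)} versus the \texttt{AtomDec(thieves)}): in whichever order they linearize, the second operation's read observes the first operation's write, so at least one of the two workers takes the corrective branch (\texttt{notify\_one} in one case, \texttt{goto Line 1} / stay-a-thief in the other). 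A secondary subtlety is that ``worker'' and ``thief'' are per-domain but a single task can wake a worker in a \emph{different} domain (\texttt{submit\_task} with $d_w\neq d_t$); I would note that this only ever issues notifications into the target domain and never decrements that domain's \texttt{thieves}, so it is covered by the same "in-flight notification" slack and does not open a new race. Everything else — the base case, the vacuous cases, and the queue-population cases — is routine bookkeeping once the invariant is stated with the notification-slack disjunction.
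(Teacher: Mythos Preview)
Your proposal is correct, but it takes a substantially different route from the paper's own proof. The paper gives a short proof by contradiction: assume an active worker exists while no thief does (and not all workers are active); then the \texttt{notify\_one} issued by the first worker to go active (line~\ref{alg::exploit_task::notify_beg} of Algorithm~\ref{alg::exploit_task}) must have been ineffective, and the paper argues this can only happen if either all workers are already active (excluded by hypothesis) or some non-active worker missed the notification before entering the 2PC guard---but that worker, if it were the last thief, would itself have issued a \texttt{notify\_one} at lines~\ref{alg::wait_for_task::thief_to_active_beg}--\ref{alg::wait_for_task::thief_to_active_end} of Algorithm~\ref{alg::wait_for_task}, contradiction. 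The paper's argument is three or four sentences and leans on the 2PC semantics of the event notifier rather than on a counter-ordering analysis.

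Your approach is a full monitor-style invariant-preservation proof: you enlarge the invariant with an ``in-flight notification'' disjunct, enumerate every transition that touches \texttt{actives[$d$]} or \texttt{thieves[$d$]}, and for the critical pair (first active worker appearing vs.\ last thief retiring) you carry out an explicit Dekker-style linearization argument on the opposite orderings of the \texttt{actives}/\texttt{thieves} reads. This is considerably more work, but it buys you something the paper's proof does not: a precise account of \emph{why} the comparison order in line~\ref{alg::exploit_task::notify_beg} of Algorithm~\ref{alg::exploit_task} matters (the paper only remarks in the prose that ``the order of these two comparisons matters'' without proving it), and an explicit treatment of the transient window between a \texttt{notify\_one} being issued and the woken worker actually incrementing \texttt{thieves}. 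The paper's contradiction argument is terser and adequate for the venue, but your transition-system decomposition is the kind of argument one would want if formalizing the scheduler or if the algorithm were to be modified.
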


\begin{proof}
We prove Lemma \ref{lemma::invariant} by contradiction.
Assuming there are no workers making steal attempts when an active worker exists,
this means an active worker (line \ref{alg::exploit_task::notify_beg} in 
Algorithm \ref{alg::exploit_task}) fails to notify one worker if no thieves exist.
There are only two scenarios for this to happen:
(1) all workers are active;
(2) a non-active worker misses the notification before entering the 2PC guard
(line \ref{alg::wait_for_task::prepare_wait} in Algorithm \ref{alg::wait_for_task}).
The first scenario is not possible as it has been excluded by the lemma. 
If the second scenario is true,
the non-active worker must not be the last thief (contradiction) or it will notify another
worker through line \ref{alg::wait_for_task::thief_to_active_beg}
in Algorithm \ref{alg::wait_for_task}.
The proof holds for other domains as our scheduler design is symmetric.
\end{proof}

\begin{theorem}
\label{theorem::correctness}
Our work-stealing algorithm can correctly complete the execution of an HTDG.
\end{theorem}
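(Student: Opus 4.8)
The plan is to read ``correctly complete the execution of an HTDG'' as the conjunction of three properties and to establish each separately, assuming the HTDG is well formed in the sense of the pitfalls discussed around Figure~\ref{fig::conditional-tasking-pitfall} (a source of zero dependencies exists and the control flow does not induce a task race): (i) \emph{safety} --- every task is submitted to a queue exactly once and hence run exactly once, by exactly one worker; (ii) \emph{progress} --- the scheduler never settles into a state where a ready task sits in some queue while every worker of that task's domain is committed to sleep; and (iii) \emph{termination} --- once the control flow produces its terminal tasks, the execution quiesces and the caller of \texttt{run} is released.

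For (i) I would induct on the task-level scheduling rule. The source tasks are pushed exactly once by \texttt{submit\_graph} (Algorithm~\ref{alg::submit_graph}). An ordinary successor $s$ is submitted by \texttt{execute\_task} (Algorithm~\ref{alg::execute_task}) exactly when \texttt{AtomDec(s.strong\_dependents)} returns $0$, an event observed by a single worker; a condition task contributes exactly one indexed successor. Since the lock-free queue of~\cite{WSQ} delivers each pushed item to exactly one popping or stealing worker, every task runs once and is touched by one thread, which is precisely the ``only one thread can touch a task at a time'' requirement for a properly conditioned graph.

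For (ii) the heart of the argument is the event-notifier 2PC. I would maintain the invariant: \emph{at every point, if a task resides in a queue of domain $d$, then some worker of domain $d$ is active, is a thief, or has a pending notification that the 2PC guarantees will be consumed by a worker inside the guarded region} (lines~\ref{alg::wait_for_task::prepare_wait}:\ref{alg::wait_for_task::commit_wait} of Algorithm~\ref{alg::wait_for_task}). The proof is a case analysis on the producer of the task: an external thread through \texttt{submit\_graph}, which notifies and whose push synchronizes with the shared-queue recheck at lines~\ref{alg::wait_for_task::shared_empty_beg}:\ref{alg::wait_for_task::shared_empty_end}; a same-domain worker through \texttt{execute\_task}, which is therefore active, so Lemma~\ref{lemma::invariant} hands us a thief that will eventually scan the producer's queue; or a cross-domain worker through \texttt{submit\_task}, which notifies whenever it observes no active worker and no thief in $d$. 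The last-thief guards at lines~\ref{alg::wait_for_task::last_thief_check_actives_beg}:\ref{alg::wait_for_task::last_thief_check_other_workers_end} then forbid the final thief of a domain from committing to wait while an active peer or a nonempty peer queue still exists. What makes these windows airtight is the ordering of the atomic operations --- incrementing \texttt{actives} before reading \texttt{thieves} in Algorithm~\ref{alg::exploit_task}, and decrementing \texttt{thieves} before the guard checks in Algorithm~\ref{alg::wait_for_task} --- exactly the Dekker-style discipline on which the notifier is built, so that the line~\ref{alg::exploit_task::notify_beg} and line~\ref{alg::submit_task::notify_beg} notifications can never be lost.

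For (iii), once all terminal tasks have run and no task remains in any queue, no worker can transition back to active, so each worker eventually exhausts its \texttt{MAX\_STEALS} budget in \texttt{explore\_task}, sees all queues empty and \texttt{actives} equal to $0$, passes every guard of \texttt{wait\_for\_task}, and commits to wait; the executor's pending-task counter reaching $0$ releases the caller, and on teardown or exception the stop branch (lines~\ref{alg::wait_for_task::stop_beg}:\ref{alg::wait_for_task::stop_end}) broadcasts \texttt{notify\_all} across all domains so that every worker leaves the loop. I expect part (ii) to be the real obstacle: ruling out the ``all workers asleep, ready task pending'' state requires a linearizability-style argument over the interleavings of \texttt{prepare\_wait}/\texttt{cancel\_wait}/\texttt{commit\_wait} with the producers' \texttt{notify\_one} calls and the \texttt{actives}/\texttt{thieves} updates, carried out for two (or more) domains at once; the delicate case is cross-domain wake-up, where a CPU worker must reliably rouse a GPU worker (or vice versa) although it only ever reads, never writes, the other domain's counters --- the point where Lemma~\ref{lemma::invariant} and the 2PC guarantee must be combined.
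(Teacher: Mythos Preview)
Your progress argument (ii) is essentially the paper's own proof: the same two-site case analysis on where a task enters a queue (external push via \texttt{submit\_graph} versus worker push via \texttt{submit\_task}, the latter split into same-domain and cross-domain), the same appeal to the 2PC guard at lines~\ref{alg::wait_for_task::prepare_wait}:\ref{alg::wait_for_task::commit_wait} and the last-thief checks at lines~\ref{alg::wait_for_task::last_thief_beg}:\ref{alg::wait_for_task::last_thief_end} of Algorithm~\ref{alg::wait_for_task}, and the same contradiction for the cross-domain wake-up. The paper stops there: it reads ``correctly complete'' narrowly as ``no submitted task goes undetected,'' and does not separately treat safety or quiescence. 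Your parts (i) and (iii) are therefore additions beyond the paper's argument; they make the statement genuinely stronger and are worth having, but they are your own contribution, not a reconstruction of the original.

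One caution on (i): the claim ``every task is submitted to a queue exactly once and hence run exactly once'' is false as stated for cyclic HTDGs --- in Figure~\ref{fig::conditional-tasking-do-while} the node \texttt{body} is submitted and executed a hundred times. Your induction must therefore be over \emph{submission events} rather than static task nodes, and it must account for how \texttt{strong\_dependents} counters are reinitialized across iterations of a conditioned cycle, machinery the paper does not expose in the pseudocode. Without that, the atomic-decrement argument (``observed by a single worker'') only covers the first pass through a node. This is a formulation issue rather than a fatal gap, but it is exactly the kind of detail that distinguishes a proof of ``no task leak'' (what the paper actually shows) from a full safety-and-liveness theorem.
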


\begin{proof} 
There are two places where a new task is submitted,
line \ref{alg::submit_graph::push_task} in Algorithm \ref{alg::submit_graph} and
line \ref{alg::submit_task::push_task} in Algorithm \ref{alg::submit_task}.
In the first place,
where a task is pushed to the shared task queue by an external thread,
the notification (line \ref{alg::submit_graph::notify} in Algorithm \ref{alg::submit_graph})
is visible to a worker in the same domain of the task for two situations:
(1) if a worker has prepared or committed to wait 
(lines \ref{alg::wait_for_task::prepare_wait}:\ref{alg::wait_for_task::commit_wait} in Algorithm \ref{alg::wait_for_task}), 
it will be notified;
(2) otherwise, at least one worker will eventually 
go through lines \ref{alg::wait_for_task::prepare_wait}:\ref{alg::wait_for_task::shared_empty_end} in Algorithm \ref{alg::wait_for_task} to steal the task.
In the second place,
where the task is pushed to the corresponding local task queue of that worker,
at least one worker will execute it in either situation:
(1) if the task is in the same domain of the worker, 
the work itself may execute the task in the subsequent \texttt{exploit\_task},
or a thief steals the task through \texttt{explore\_task};
(2) if the worker has a different domain from the task 
(line \ref{alg::submit_task::domain_check} in Algorithm \ref{alg::submit_task}), 
the correctness can be proved by contradiction.
Assuming this task is undetected, which means either the worker
did not notify a corresponding domain worker to run the task 
(false at the condition of line \ref{alg::submit_task::notify_beg} in Algorithm \ref{alg::submit_task})
or notified one worker (line \ref{alg::submit_task::notify} 
in Algorithm \ref{alg::submit_task}) but none have come back.
In the former case, we know at least one worker is active or stealing,
which will eventually go through line
\ref{alg::wait_for_task::last_thief_beg}:\ref{alg::wait_for_task::last_thief_end} 
of Algorithm \ref{alg::wait_for_task} 
to steal this task.
Similarly, the latter case is not possible under our 2PC,
as it contradicts the guarding scan in lines 
\ref{alg::wait_for_task::prepare_wait}:\ref{alg::wait_for_task::commit_wait} 
of Algorithm \ref{alg::wait_for_task}.
\end{proof}

\begin{theorem}
\label{theorem::underscription}
Our work-stealing algorithm does not under-subscribe thread resources during 
the execution of an HTDG.
\end{theorem}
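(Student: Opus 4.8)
The plan is to first pin down what Theorem \ref{theorem::underscription} asks. I will say the scheduler \emph{under-subscribes} a domain $d$ at an instant $\tau$ if some task queue of domain $d$ (a worker's local queue, or the shared queue) is nonempty at $\tau$, yet every worker of domain $d$ is asleep at $\tau$ --- i.e.\ currently blocked in \texttt{commit\_wait} at line \ref{alg::wait_for_task::commit_wait} of Algorithm \ref{alg::wait_for_task} --- and no wake-up notification for $d$ is pending. The theorem then amounts to: no reachable state under-subscribes any domain (a worker that is executing a task, stealing, or about to be woken does not count as an idle/under-subscribed resource, so this is exactly the claim). I would prove it by contradiction, and the whole argument is symmetric in the domain by the design of the scheduler, so it suffices to treat one $d$.

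Suppose domain $d$ is under-subscribed at $\tau$, let $Q$ be a queue of domain $d$ nonempty at $\tau$, and among the (by hypothesis, all) workers of $d$ let $w$ be the one whose current sleep began latest, at time $\tau_w\le\tau$ when it executed \texttt{commit\_wait}. To reach that line, $w$ must have, after calling \texttt{prepare\_wait} (line \ref{alg::wait_for_task::prepare_wait}), passed in order: the shared-queue re-scan (lines \ref{alg::wait_for_task::shared_empty_beg}:\ref{alg::wait_for_task::shared_empty_end}), the stop check, the ``no active worker in $d$'' check (lines \ref{alg::wait_for_task::last_thief_check_actives_beg}:\ref{alg::wait_for_task::last_thief_check_actives_end}), and the scan ``every other worker's queue of domain $d$ is empty'' (lines \ref{alg::wait_for_task::last_thief_check_other_workers_beg}:\ref{alg::wait_for_task::last_thief_check_other_workers_end}); otherwise it would have cancelled the wait and jumped back to stealing. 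Hence the task now in $Q$ was inserted after $w$ scanned $Q$. I then case on $Q$. If $Q$ is the shared queue, the inserting thread ran \texttt{submit\_graph} and issued \texttt{notify} on domain $d$ (lines \ref{alg::submit_graph::push_task}:\ref{alg::submit_graph::notify}); since this happens after $w$'s \texttt{prepare\_wait}, the event-notifier contract forces $w$'s \texttt{commit\_wait} to return without sleeping --- contradicting that $w$ is asleep at $\tau$. If $Q$ is the local queue of some worker $w'$, then since only the owner pushes into a work-stealing queue, $w'$ inserted the task via \texttt{submit\_task}; if $w'$ has domain $d$ it would first drain $Q$ in the loop of lines \ref{alg::exploit_task::drain_beg}:\ref{alg::exploit_task::drain_end} of Algorithm \ref{alg::exploit_task} and could not be asleep with $Q$ nonempty; if $w'$ has another domain, the guard at line \ref{alg::submit_task::notify_beg} held at the push (no worker of $d$ was active or stealing then, by the choice of $w$ as last-to-sleep), so line \ref{alg::submit_task::notify} fired, and again this post-dates $w$'s \texttt{prepare\_wait} and wakes $w$ --- contradiction. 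Either way the assumed state is unreachable.

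The step I expect to be the main obstacle is exactly the race window between $w$'s guarding scans (lines \ref{alg::wait_for_task::shared_empty_beg}:\ref{alg::wait_for_task::last_thief_check_other_workers_end}) and a concurrent insert-then-notify --- i.e.\ arguing rigorously that the two-phase-commit notifier closes this gap across domains. I would handle it with the notifier's contract stated earlier (once a worker has called \texttt{prepare\_wait}, any later \texttt{notify} is observed by it, so its \texttt{commit\_wait} returns immediately), combined with Lemma \ref{lemma::invariant} to dispose of the one remaining sub-case: that at the moment of the push some worker of $d$ was still active or stealing. By the lemma there is then a live thief (or all of $d$ is active, in which case nothing is under-subscribed), and following that thief forward it must itself eventually become the last thief and run the scans of lines \ref{alg::wait_for_task::last_thief_beg}:\ref{alg::wait_for_task::last_thief_end}, reducing to the case already handled for $w$. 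A secondary subtlety worth spelling out, rather than leaving implicit, is that a worker of one domain may legitimately hold tasks of another domain in its local queue; this is why the scan of lines \ref{alg::wait_for_task::last_thief_check_other_workers_beg}:\ref{alg::wait_for_task::last_thief_check_other_workers_end} must range over \emph{all} workers' queues of domain $d$ and why it must synchronize with the cross-domain \texttt{notify} of \texttt{submit\_task} (lines \ref{alg::submit_task::notify_beg}:\ref{alg::submit_task::notify_end}).
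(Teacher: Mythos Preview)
Your proposal is correct and reaches the same conclusion, but the route is genuinely different from the paper's. The paper treats Theorem~\ref{theorem::underscription} as a short corollary: it invokes Theorem~\ref{theorem::correctness} (no task leak) together with Lemma~\ref{lemma::invariant}, and then argues a one-line ``catch-up'' mechanism---the guaranteed thief of Lemma~\ref{lemma::invariant} steals, becomes active, and (if it was the last thief) wakes another worker, so the active-worker count chases the task count. Your argument instead works \emph{directly} from the algorithm text: you fix the last domain-$d$ worker to commit, reconstruct the guard sequence it must have passed (lines \ref{alg::wait_for_task::prepare_wait}--\ref{alg::wait_for_task::last_thief_check_other_workers_end}), deduce that the offending task was pushed after that worker's scan, and then case-split on whether the push came through \texttt{submit\_graph} or \texttt{submit\_task} (same domain versus cross-domain). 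You only appeal to Lemma~\ref{lemma::invariant} for the residual race where the cross-domain notifier guard at line~\ref{alg::submit_task::notify_beg} sees a still-live thief or active worker, and you do \emph{not} need Theorem~\ref{theorem::correctness} as a black box.

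What each approach buys: the paper's proof is compact and modular---once correctness is established, non-under-subscription is essentially free---but it leaves the concurrency reasoning implicit inside Theorem~\ref{theorem::correctness}. Your decomposition is more elementary and makes the synchronization obligations explicit (the 2PC contract after \texttt{prepare\_wait}, the owner-only push invariant of the work-stealing queue, and the all-workers scope of the scan at lines \ref{alg::wait_for_task::last_thief_check_other_workers_beg}:\ref{alg::wait_for_task::last_thief_check_other_workers_end}); it also surfaces the cross-domain subtlety you flag at the end, which the paper's proof absorbs into its appeal to Theorem~\ref{theorem::correctness}. The one place worth tightening in your write-up is the ``reducing to the case already handled for $w$'' step: you should state explicitly that the thief you follow forward scans $Q$ strictly after $t_{\text{push}}$ (hence sees $Q$ nonempty, since the task is still in $Q$ at $\tau$) and therefore cannot reach \texttt{commit\_wait}---this closes the induction without circularity.
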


\begin{proof} 
Theorem \ref{theorem::underscription} is a byproduct of Lemma \ref{lemma::invariant} 
and Theorem \ref{theorem::correctness}.
Theorem \ref{theorem::correctness} proves that our scheduler never
has task leak (i.e., undetected task parallelism).
During the execution of an HTDG,
whenever the number of tasks is larger than the present number of workers,
Lemma \ref{lemma::invariant} guarantees one worker is making steal attempts,
unless all workers are active.
The 2PC guard (lines \ref{alg::wait_for_task::last_thief_check_other_workers_beg}:\ref{alg::wait_for_task::last_thief_check_other_workers_end} 
in Algorithm \ref{alg::wait_for_task})
ensures that worker will successfully steal a task and become
an active worker (unless no more tasks), 
which in turn wakes up another worker if that worker is the last thief.
As a consequence, the number of workers will catch up on the number of tasks 
one after one to avoid under-subscribed thread resources.
\end{proof}

\begin{theorem}
\label{theorem::wasteful_steals}
At any moment during the execution of an HTDG,
the number of wasteful steals is bounded by
$\mathcal{O}(MAX\_STEALS\times(|W| + |D|\times(E/e_s)))$,
where $W$ is the worker set, $D$ is the domain set, $E$ is the maximum execution
time of any task, and $e_s$ is the execution time of Algorithm \ref{alg::explore_task}.
\end{theorem}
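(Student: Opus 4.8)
The plan is to charge every wasteful steal to a nearby ``productive'' event --- either a task that is currently being executed or a worker that is permanently going to sleep --- so that the total splits into the $|D|\times(E/e_s)$ and $|W|$ pieces of the bound. The unit of bookkeeping is a single invocation of \texttt{explore\_task} (Algorithm~\ref{alg::explore_task}): it issues at most $MAX\_STEALS$ steal attempts and, by hypothesis, consumes at least time $e_s$, so it suffices to bound how many such invocations a worker can complete before it either succeeds in stealing or commits to wait.

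First I would isolate, for each domain $d$, the worker currently holding the ``last thief'' role --- the role passed along by the hand-off rule ``the last thief to become active wakes a replacement.'' By Lemma~\ref{lemma::invariant} this role is occupied exactly when $d$ has an active worker, and by the last-thief guard of Algorithm~\ref{alg::wait_for_task} (lines~\ref{alg::wait_for_task::last_thief_beg}:\ref{alg::wait_for_task::last_thief_end}) its occupant re-enters the steal loop precisely while it keeps witnessing an active worker or a nonempty local queue in $d$; every such witness is a task presently under execution, hence one that finishes within $E$. Since each re-entry costs at least $e_s$, I would attribute $\mathcal{O}(E/e_s)$ \texttt{explore\_task} invocations to each task that keeps $d$ busy and then sum over the at-most-$|D|$ simultaneously-occupied last-thief roles, yielding the $\mathcal{O}(|D|\times(E/e_s)\times MAX\_STEALS)$ term. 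Theorem~\ref{theorem::underscription} is what licenses the claim that this spinning is transient rather than a stalled execution, so that the chain of re-entries hung on one busy epoch really does expire with the task that anchors it.

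Next I would handle every other thief. Such a worker either steals a task during its sweep --- in which case I charge its $\le MAX\_STEALS$ wasteful steals to that task and note the sweep is not repeated --- or it fails the whole sweep, passes the shared-queue, stop, and non-last-thief checks of Algorithm~\ref{alg::wait_for_task}, and reaches \texttt{commit\_wait}. By Lemma~\ref{lemma::invariant} a non-last thief going to sleep keeps the invariant intact, so this ``farewell sweep'' occurs at most once per worker at any instant, giving the $\mathcal{O}(|W|\times MAX\_STEALS)$ term; a subsequent re-awakening starts a fresh sweep only after consuming a notification emitted in Algorithm~\ref{alg::exploit_task} (lines~\ref{alg::exploit_task::notify_beg}:\ref{alg::exploit_task::notify_end}) or Algorithm~\ref{alg::submit_task} (lines~\ref{alg::submit_task::notify_beg}:\ref{alg::submit_task::notify_end}), and the 2PC discipline that synchronizes those notifications with the guard scan of \texttt{wait\_for\_task} ensures each such wake-up is itself paid for by a transition already accounted above, so no wasteful steal is counted twice. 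Adding the two contributions gives the stated $\mathcal{O}(MAX\_STEALS\times(|W| + |D|\times(E/e_s)))$ bound.

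The main obstacle I anticipate is making precise that the last thief's spinning is confined to ``one anchoring task at a time.'' A running task may spawn same-domain successors that land in the spinning worker's victim set, or cross-domain successors that spin up a fresh last thief elsewhere, and I must argue that each re-entry of the steal loop is attributable to a \emph{distinct} in-flight task rather than an unbounded cascade, and that the guard in lines~\ref{alg::wait_for_task::last_thief_check_actives_beg}:\ref{alg::wait_for_task::last_thief_check_other_workers_end} never re-admits the worker on stale information. Formalizing the interleaving of \texttt{prepare\_wait}, \texttt{cancel\_wait}, and \texttt{commit\_wait} with concurrent \texttt{submit\_task} calls from several domains --- so that no wake-up is lost (which would strand parallelism and void the charging argument) and none is spuriously generated (which would inflate the steal count) --- is the delicate core of the proof, and it is also where the precise meaning of ``at any moment'' (which quiescent-to-quiescent window the steals are counted over) has to be pinned down.
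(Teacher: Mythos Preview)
Your proposal is correct and takes essentially the same approach as the paper: both split the wasteful steals into the last thief in each domain (charged to the at-most-$E$ duration of the task keeping that domain active, yielding the $|D|\times(E/e_s)$ term) and all other thieves (each making one $MAX\_STEALS$-long sweep before committing to wait, yielding the $|W|$ term). The paper presents this as a direct count in an explicitly identified worst-case configuration---one active worker per domain with everyone else stealing---rather than as a general charging argument, so its proof is much shorter and simply does not engage the 2PC interleaving subtleties you flag as obstacles.
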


\begin{proof}
We give a direct proof for Theorem \ref{theorem::wasteful_steals}
using the following notations:
$D$ denotes the domain set,
$d$ denotes a domain (e.g., CPU, GPU),
$W$ denotes the entire worker set,
$W_d$ denotes the worker set in domain $d$,
$w_d$ denotes a worker in domain $d$ (i.e., $w_d \in W_d$),
$e_s$ denotes the time to complete one round of steal attempts
(i.e., Algorithm \ref{alg::explore_task}),
$e_d$ denotes the maximum execution time of any task in domain $d$, and 
$E$ denotes the maximum execution time of any task in the given HTDG.

At any time point,
the worst case happens at the following scenario:
\textit{for each domain $d$ only one worker $w_d$ is actively
running one task while all the other workers are making unsuccessful steal attempts}.
Due to Lemma \ref{lemma::invariant} and 
lines \ref{alg::wait_for_task::last_thief_beg}:\ref{alg::wait_for_task::last_thief_end}
in Algorithm \ref{alg::wait_for_task},
only one thief $w'_d$ will eventually remain in the loop, and the other 
$|W_d| - 2$ thieves will go sleep after one round of 
unsuccessful steal attempts (line \ref{alg::wait_for_task::explore_task}
in Algorithm \ref{alg::wait_for_task}) which ends up with
$MAX\_STEALS\times(|W_d| - 2)$ wasteful steals.
For the only one thief $w'_d$, it keeps failing in steal attempts
until the task running by the only active worker $w_d$ finishes,
and then both go sleep. 
This results in another $MAX\_STEALS\times(e_d/e_s) + MAX\_STEALS$ wasteful steals;
the second terms comes from the active worker because
it needs another round of steal attempts (line \ref{alg::wait_for_task::explore_task}
in Algorithm \ref{alg::wait_for_task}) before going to sleep.
Consequently, the number of wasteful steals across all domains
is bounded as follows:

\begin{equation} \label{eq1}
\begin{split}
 & \sum_{d \in D} MAX\_STEALS\times(|W_d| - 2 + (e_d/e_s) + 1) \\
  & \leq MAX\_STEALS\times\sum_{d \in D} (|W_d| + e_d/e_s) \\
  & \leq MAX\_STEALS\times\sum_{d \in D} (|W_d| + E/e_s) \\
  & = \mathcal{O}(MAX\_STEALS\times(|W| + |D|\times(E/e_s)))
\end{split}
\end{equation}

We do not derive the bound over the execution of an HTDG but the worst-case 
number of wasteful steals at any time point,
because the presence of control flow can lead to non-deterministic execution time
that requires a further assumption of task distribution.
\end{proof}

%-------------------------------------------------------------------------------
\section{Experimental Results}
%-------------------------------------------------------------------------------

We evaluate the performance of Taskflow on two fronts: 
micro-benchmarks and two realistic workloads, VLSI incremental timing analysis and 
machine learning.
We use micro-benchmarks to analyze the tasking performance of Taskflow
without much bias of application algorithms.
We will show that the performance benefits of Taskflow 
observed in micro-benchmarks
become significant in real workloads.
We will study the performance across runtime, energy efficiency, and throughput.
All experiments ran on a Ubuntu Linux 5.0.0-21-generic x86 64-bit 
machine with 40 Intel Xeon CPU cores at 2.00 GHz,
4 GeForce RTX 2080 GPUs, and 256 GB RAM.
We compiled all programs using Nvidia CUDA v11
on a host compiler of clang++ v10
with C++17 standard \texttt{-std=c++17} 
and optimization flag \texttt{-O2} enabled.
We do not observe significant difference between \texttt{-O2} and \texttt{-O3}
in our experiments.
Each run of $N$ CPU cores and $M$ GPUs corresponds
to $N$ CPU and $M$ GPU worker threads.
All data is an average of 20 runs.

\subsection{Baseline}

Give a large number of TGCSs,
it is impossible to compare Taskflow with all of them.
Each of the existing systems has its pros and cons
and dominates certain applications.
%
%We consider oneTBB FlowGraph~\cite{TBB} (2021.1 release) 
%as the baseline for the following reasons:
%First, FlowGraph has the same application target as Taskflow.
%Second, oneTBB has extensive, successful user experiences,
%and is supported by many systems as their execution back-ends
%\cite{StarPU, RAJA, Kokkos}.
%Third,
%oneTBB has excellent performance powered by comprehensive documentation.
%According to our research on many real workloads,
%very few existing task graph programming systems~\cite{StarPU, Kokkos, XKAAPI, HPX, X10}
%can outperform oneTBB FlowGraph in the newest oneAPI product.
%%
%With these reasons,
%it is sufficient to compare Taskflow with oneTBB FlowGraph.
%Reviewers can refer to supplementary materials 
%for comparison with other systems.
%
We consider oneTBB~\cite{TBB}, StarPU~\cite{StarPU}, HPX~\cite{HPX}, 
and OpenMP~\cite{OpenMP} each representing
a particular paradigm that has gained some successful user experiences in CAD
due to performance~\cite{Lu_18_01}.
oneTBB (2021.1 release) 
is an industrial-strength parallel programming system 
under Intel oneAPI~\cite{TBB}.
%Almost all CAD companies have oneTBB in their toolchains.
We consider its FlowGraph library and encapsulate each GPU task in a CPU function.
At the time of this writing, FlowGraph does not have dedicated work stealing for HTDGs.
StarPU (version 1.3) is a CPU-GPU task programming system widely used
in the scientific computing community~\cite{StarPU}.
It provides a C-based syntax for writing HTDGs
on top of a work-stealing runtime highly optimized for CPUs and GPUs.
HPX (version 1.4) is a C++ standard library for concurrency and parallelism~\cite{HPX}.
It supports implicit task graph programming through
aggregating \textit{future} objects in a dataflow API.
OpenMP (version 4.5 in clang toolchains) 
is a directive-based programming framework for handling loop parallelism~\cite{OpenMP}.
It supports static graph encoding 
using task dependency clauses.
%A rather common approach is instead to levelize the graph and propagate 
%computations level by level.
%We base our comparison on the entire system performance 
%because it represents the \textit{real} cost from an end user's perspective.

%For each GPU task (e.g., cudaFlow),
%we create a CUDA graph and encapsulate it in a FlowGraph node.
%This simple encapsulation is recommended by oneAPI and,
%based on our research, it already outperforms existing systems that develop 
%work stealing for GPU tasks~\cite{StarPU, XKAAPI}.
%At the time of this writing, FlowGraph does not support in-graph control flow.
%Users have to either statically unroll the graph or 
%partition the graph across control-flow decisions and synchronize partitioned graphs.
%In terms of the system runtime,
%oneTBB also implements work stealing using a sleep-based mechanism
%to put workers that frequently fail in steal attempts to sleep.
%However, it does not have any invariant as ours to control wasteful steals,
%but uses a sophisticated heuristic to progressively wake up 
%two workers when a global load ratio falls below a threshold.
%Regardless of the strategy,
%both Taskflow and oneTBB try to adapt the workers to dynamically generated
%task parallelism to improve the overall system performance.
%

To measure the expressiveness and programmability of Taskflow,
we hire five PhD-level C++ programmers outside our research group
to implement our experiments.
We educate them the essential knowledge about Taskflow and baseline TGCSs
and provide them all algorithm blocks such that they can focus on 
programming HTDGs.
For each implementation,
we record the lines of code (LOC), the number of tokens, 
cyclomatic complexity (measured by~\cite{SLOCCount}), time to finish,
and the percentage of time spent on debugging.
We average these quantities over five programmers until they 
obtain the correct result.
This measurement may be subjective but it highlights the programming productivity
and turnaround time of each TGCSs from a real user's perspective.

\subsection{Micro-benchmarks}

%We compare Taskflow with four popular task graph programming systems,
%oneTBB, StarPU, HPX, and OpenMP,
%on a synthetic benchmark.
%Each baseline represents a specific programming paradigm for 
%computing HTDGs.
%oneTBB (version 2021.1) 
%is an industrial-strength parallel programming system 
%under Intel oneAPI~\cite{TBB}.
%We consider its FlowGraph library and encapsulate each GPU task in a CPU function.
%StarPU (version 1.3) is a CPU-GPU task programming system widely used
%in the high-performance computing (HPC) community~\cite{StarPU}.
%It provides a C-based syntax for writing HTDGs
%on top of a work-stealing runtime that adopts the busy-waiting strategy.
%HPX (version 1.4) is a C++ standard library for concurrency and parallelism~\cite{HPX}.
%It supports implicit task graph programming through
%aggregating \textit{future} objects in a dataflow-styled API.
%OpenMP (version 4.5 in GCC toolchains) 
%is a directive-based programming framework for handling loop parallelism~\cite{OpenMP}.
%It supports implicit graph encoding using task dependency clauses.
%Since OpenMP is good at loop parallelism,
%a rather common approach is to levelize the task graph and propagate 
%computations level by level, 
%in which tasks at the same level can run in parallel.

We randomly generate a set of DAGs (i.e., HTDGs) with equal distribution of CPU and GPU tasks.
Each task performs a SAXPY operation over 1K elements.
For fair purpose, we implemented CUDA Graph~\cite{CUDAGraph} for all baselines;
each GPU task is a CUDA graph of three GPU operations, H2D copy, kernel, and H2D copy,
in this order of dependencies.
Table \ref{tab::programming_effort_on_micro_benchmark} summarizes the 
programming effort of each method.
Taskflow requires the least amount of 
lines of code (LOC) and written tokens.
The cyclomatic complexity of Taskflow measured at a single function and
across the whole program is also the smallest.
The development time of Taskflow-based implementation is much more productive
than the others.
For this simple graph,
Taskflow and oneTBB are very easy for our programmers to implement,
whereas we found they spent a large amount of time on 
debugging task graph parallelism with StarPU, HPX, and OpenMP.

\begin{table}[!h]
\centering
\caption{Programming Effort on Micro-benchmark}
\vspace{-2mm}
\label{tab::programming_effort_on_micro_benchmark}
\resizebox{\columnwidth}{!}{%
\begin{tabular}{|c|c|c|c|c|c|c|}
\hline
\textbf{Method} & \textbf{LOC} & \textbf{\#Tokens} & \textbf{CC} & \textbf{WCC} & \textbf{Dev} & \textbf{Bug} \\
\hline                                                        
Taskflow & 69  & 650  & 6  & 8   & 14 & 1\%  \\
oneTBB   & 182 & 1854 & 8  & 15  & 25 & 6\%  \\
StarPU   & 253 & 2216 & 8  & 21  & 47 & 19\%  \\
HPX      & 255 & 2264 & 10 & 24  & 41 & 33\%  \\
OpenMP   & 182 & 1896 & 13 & 19  & 57 & 49\%  \\
\hline
\end{tabular}
}
\begin{tablenotes}[flushleft]
\footnotesize
\item [1]\textbf{CC}: maximum cyclomatic complexity in a single function
\item [2]\textbf{WCC}: weighted cyclomatic complexity of the program
\item [3]\textbf{Dev}: minutes to complete the implementation
\item [4]\textbf{Bug}: time spent on debugging as opposed to coding task graphs
\end{tablenotes}
\end{table}

Next, we study the overhead of task graph parallelism
among Taskflow, oneTBB, and StarPU.
As shown in Table \ref{tab::comparison_of_task_graph_overhead},
the static size of a task, compiled on our platform,
is 272, 136, and 1472 bytes for Taskflow, oneTBB, and StarPU, respectively.
We do not report the data of HPX and OpenMP 
because they do not support explicit task graph construction at the functional level.
The time it takes for Taskflow 
to create a task and add a dependency is also faster than oneTBB and StarPU.
We amortize the time across 1M operations because all systems support pooled memory
to recycle tasks.
We found StarPU has significant overhead in creating HTDGs.
The overhead always occupies 5-10\% of the total execution time
regardless of the HTDG size.

\begin{table}[!h]
\centering
\caption{Overhead of Task Graph Creation}
\vspace{-2mm}
\label{tab::comparison_of_task_graph_overhead}
\resizebox{\columnwidth}{!}{%
\begin{tabular}{|c|c|c|c|c|c|c|}
\hline
\textbf{Method} & \textbf{$S_{task}$} & \textbf{$T_{task}$} & \textbf{$T_{edge}$} & \textbf{$\rho_{<10}$} & \textbf{$\rho_{<5}$} & \textbf{$\rho_{<1}$} \\
\hline                                                        
Taskflow & 272 & 61 ns  & 14 ns & 550  & 2550 & 35050  \\
oneTBB       & 136 & 99 ns  & 54 ns & 1225 & 2750 & 40050  \\
StarPU       & 1472 & 259 ns & 384 ns & 7550 & - & - \\
\hline
\end{tabular}
}
\begin{tablenotes}[flushleft]
\item [1]\textbf{$S_{task}$}: static size per task in bytes
\item [2]\textbf{$T_{task}/T_{edge}$}: amortized time to create a task/dependency
\item [3]\textbf{$\rho_v$}: graph size where its creation overhead is below $v$\%
\end{tablenotes}
\end{table}

Figure \ref{fig::micro_benchmark_overall_performance} shows the overall performance
comparison between Taskflow and the baseline at different HTDG sizes.
In terms of runtime (top left of Figure \ref{fig::micro_benchmark_overall_performance}), Taskflow outperforms others across most data points.
We complete the largest HTDG by 
1.37$\times$,
1.44$\times$,
1,53$\times$,
and 
1.40$\times$ faster than oneTBB, StarPU, HPX, and OpenMP, respectively.
The memory footprint (top right of Figure \ref{fig::micro_benchmark_overall_performance}) 
of Taskflow is close to oneTBB and OpenMP.
HPX has higher memory because it relies on aggregated futures to
describe task dependencies at the cost of shared states.
Likewise, StarPU does not offer a closure-based interface and thus
requires a flat layout (i.e., codelet) to describe tasks.
We use the Linux \texttt{perf} tool to measure the power consumption of all 
cores plus LLC~\cite{PerfStat}.
The total joules (bottom left of Figure \ref{fig::micro_benchmark_overall_performance}) consumed by Taskflow is consistently smaller than the others,
due to our adaptive worker management.
In terms of power (bottom right of Figure \ref{fig::micro_benchmark_overall_performance}), Taskflow, oneTBB, and OpenMP are more power-efficient than HPX and StarPU.
The difference between Taskflow and StarPU continues to increase as
we enlarge the HTDG size.

\begin{figure}[!h]
  \centering
  \pgfplotsset{
    title style={font=\LARGE},
    label style={font=\large},
  }
  %% desktop
  %\begin{tikzpicture}[scale=0.49]
  %  \begin{axis}[
  %    title=Runtime (4 CPUs 1 GPU),
  %    ylabel=Runtime (ms),
  %    xlabel=Graph Size (|V|+|E|),
  %    legend pos=north west,
  %    %y filter/.code={\pgfmathparse{#1/1000}\pgfmathresult}
  %  ]
  %  \addplot+ table[x=size,y=tf,col sep=space]{Fig/micro_benchmark/runtime-desktop-4C1G.txt};
  %  \addplot+ table[x=size,y=oneTBB,col sep=space]{Fig/micro_benchmark/runtime-desktop-4C1G.txt};
  %  \addplot+ [mark=diamond,color=cyan] table[x=size,y=starpu,col sep=space]{Fig/micro_benchmark/runtime-desktop-4C1G.txt};
  %  \addplot+ table[x=size,y=hpx,col sep=space]{Fig/micro_benchmark/runtime-desktop-4C1G.txt};
  %  \addplot+ [mark=o,color=brown] table[x=size,y=omp,col sep=space]{Fig/micro_benchmark/runtime-desktop-4C1G.txt};
  %  \legend{Taskflow, oneTBB, StarPU, HPX, OpenMP}
  %  \end{axis}
  %\end{tikzpicture}
  % server
  \begin{tikzpicture}[scale=0.49]
    \begin{axis}[
      %title=Runtime (40 CPUs 4 GPUs),
      title=Runtime,
      ylabel=Runtime (ms),
      xlabel=Graph Size ($|V|+|E|$),
      legend pos=north west,
      %y filter/.code={\pgfmathparse{#1/1000}\pgfmathresult}
    ]
    \addplot+ table[x=size,y=tf,col sep=space]{Fig/micro_benchmark/runtime-server-40C4G.txt};
    \addplot+ table[x=size,y=oneTBB,col sep=space]{Fig/micro_benchmark/runtime-server-40C4G.txt};
    \addplot+ [mark=diamond,color=cyan]table[x=size,y=starpu,col sep=space]{Fig/micro_benchmark/runtime-server-40C4G.txt};
    \addplot+ table[x=size,y=hpx,col sep=space]{Fig/micro_benchmark/runtime-server-40C4G.txt};
    \addplot+ [mark=o,color=brown] table[x=size,y=omp,col sep=space]{Fig/micro_benchmark/runtime-server-40C4G.txt};
    \legend{Taskflow, oneTBB, StarPU, HPX, OpenMP}
    \end{axis}
  \end{tikzpicture}
  %% desktop
  %\begin{tikzpicture}[scale=0.49]
  %  \begin{axis}[
  %    title=Memory (4 CPUs 1 GPU),
  %    ylabel=Maximum RSS (MB),
  %    xlabel=Graph Size (|V|+|E|),
  %    legend pos=north west,
  %    y filter/.code={\pgfmathparse{#1/1000}\pgfmathresult}
  %  ]
  %  \addplot+ table[x=size,y=tf,col sep=space]{Fig/micro_benchmark/memory-desktop-4C1G.txt};
  %  \addplot+ table[x=size,y=oneTBB,col sep=space]{Fig/micro_benchmark/memory-desktop-4C1G.txt};
  %  \addplot+ [mark=diamond,color=cyan] table[x=size,y=starpu,col sep=space]{Fig/micro_benchmark/memory-desktop-4C1G.txt};
  %  \addplot+ table[x=size,y=hpx,col sep=space]{Fig/micro_benchmark/memory-desktop-4C1G.txt};
  %  \addplot+ [mark=o,color=brown] table[x=size,y=omp,col sep=space]{Fig/micro_benchmark/memory-desktop-4C1G.txt};
  %  \legend{Taskflow, oneTBB, StarPU, HPX, OpenMP}
  %  \end{axis}
  %\end{tikzpicture}
  % server
  \begin{tikzpicture}[scale=0.49]
    \begin{axis}[
      %title=Memory (40 CPUs 4 GPUs),
      title=Memory,
      ylabel=Maximum RSS (MB),
      xlabel=Graph Size ($|V|+|E|$),
      legend pos=north west,
      ymax=650,
      y filter/.code={\pgfmathparse{#1/1000}\pgfmathresult}
    ]
    \addplot+ table[x=size,y=tf,col sep=space]{Fig/micro_benchmark/memory-server-40C4G.txt};
    \addplot+ table[x=size,y=oneTBB,col sep=space]{Fig/micro_benchmark/memory-server-40C4G.txt};
    \addplot+ [mark=diamond,color=cyan] table[x=size,y=starpu,col sep=space]{Fig/micro_benchmark/memory-server-40C4G.txt};
    \addplot+ table[x=size,y=hpx,col sep=space]{Fig/micro_benchmark/memory-server-40C4G.txt};
    \addplot+ [mark=o,color=brown] table[x=size,y=omp,col sep=space]{Fig/micro_benchmark/memory-server-40C4G.txt};
    \legend{Taskflow, oneTBB, StarPU, HPX, OpenMP}
    \end{axis}
  \end{tikzpicture}
  %% desktop
  %\begin{tikzpicture}[scale=0.49]
  %  \begin{axis}[
  %    title=Energy (4 CPUs 1 GPU),
  %    ylabel=Total Joules (J),
  %    xlabel=Graph Size (|V|+|E|),
  %    legend pos=north west,
  %    %y filter/.code={\pgfmathparse{#1/1000}\pgfmathresult}
  %  ]
  %  \addplot+ table[x=size,y=tf-j,col sep=space]{Fig/micro_benchmark/energy-desktop-4C1G.txt};
  %  \addplot+ table[x=size,y=oneTBB-j,col sep=space]{Fig/micro_benchmark/energy-desktop-4C1G.txt};
  %  \addplot+ [mark=diamond,color=cyan] table[x=size,y=starpu-j,col sep=space]{Fig/micro_benchmark/energy-desktop-4C1G.txt};
  %  \addplot+ table[x=size,y=hpx-j,col sep=space]{Fig/micro_benchmark/energy-desktop-4C1G.txt};
  %  \addplot+ [mark=o,color=brown] table[x=size,y=omp-j,col sep=space]{Fig/micro_benchmark/energy-desktop-4C1G.txt};
  %  \legend{Taskflow, oneTBB, StarPU, HPX, OpenMP}
  %  \end{axis}
  %\end{tikzpicture}
  % server
  \begin{tikzpicture}[scale=0.49]
    \begin{axis}[
      title=Energy,
      ylabel=Total Joules (J),
      xlabel=Graph Size ($|V|+|E|$),
      legend pos=north west,
      ymax=1100
      %y filter/.code={\pgfmathparse{#1/1000}\pgfmathresult}
    ]
    \addplot+ table[x=size,y=tf-j,col sep=space]{Fig/micro_benchmark/energy-server-40C4G.txt};
    \addplot+ table[x=size,y=oneTBB-j,col sep=space]{Fig/micro_benchmark/energy-server-40C4G.txt};
    \addplot+ [mark=diamond,color=cyan] table[x=size,y=starpu-j,col sep=space]{Fig/micro_benchmark/energy-server-40C4G.txt};
    \addplot+ table[x=size,y=hpx-j,col sep=space]{Fig/micro_benchmark/energy-server-40C4G.txt};
    \addplot+ [mark=o,color=brown] table[x=size,y=omp-j,col sep=space]{Fig/micro_benchmark/energy-server-40C4G.txt};
    \legend{Taskflow, oneTBB, StarPU, HPX, OpenMP}
    \end{axis}
  \end{tikzpicture}
  %% desktop
  %\begin{tikzpicture}[scale=0.49]
  %  \begin{axis}[
  %    title=Power (4 CPUs 1 GPU),
  %    ylabel=Average Power (W),
  %    xlabel=Graph Size (|V|+|E|),
  %    legend pos=north west,
  %    ymax=55
  %    %y filter/.code={\pgfmathparse{#1/1000}\pgfmathresult}
  %  ]
  %  \addplot+ table[x=size,y=tf-p,col sep=space]{Fig/micro_benchmark/energy-desktop-4C1G.txt};
  %  \addplot+ table[x=size,y=oneTBB-p,col sep=space]{Fig/micro_benchmark/energy-desktop-4C1G.txt};
  %  \addplot+ [mark=diamond,color=cyan] table[x=size,y=starpu-p,col sep=space]{Fig/micro_benchmark/energy-desktop-4C1G.txt};
  %  \addplot+ table[x=size,y=hpx-p,col sep=space]{Fig/micro_benchmark/energy-desktop-4C1G.txt};
  %  \addplot+ [mark=o,color=brown] table[x=size,y=omp-p,col sep=space]{Fig/micro_benchmark/energy-desktop-4C1G.txt};
  %  \legend{Taskflow, oneTBB, StarPU, HPX, OpenMP}
  %  \end{axis}
  %\end{tikzpicture}
  % server
  \begin{tikzpicture}[scale=0.49]
    \begin{axis}[
      %title=Power (40 CPUs 4 GPUs),
      title=Power (all cores + LLC),
      ylabel=Average Power (W),
      xlabel=Graph Size ($|V|+|E|$),
      %legend pos=north west,
      legend style={at={(0.53,0.58)},anchor=west}
      %y filter/.code={\pgfmathparse{#1/1000}\pgfmathresult}
    ]
    \addplot+ table[x=size,y=tf-p,col sep=space]{Fig/micro_benchmark/energy-server-40C4G.txt};
    \addplot+ table[x=size,y=oneTBB-p,col sep=space]{Fig/micro_benchmark/energy-server-40C4G.txt};
    \addplot+ [mark=diamond,color=cyan] table[x=size,y=starpu-p,col sep=space]{Fig/micro_benchmark/energy-server-40C4G.txt};
    \addplot+ table[x=size,y=hpx-p,col sep=space]{Fig/micro_benchmark/energy-server-40C4G.txt};
    \addplot+ [mark=o,color=brown] table[x=size,y=omp-p,col sep=space]{Fig/micro_benchmark/energy-server-40C4G.txt};
    \legend{Taskflow, oneTBB, StarPU, HPX, OpenMP}
    \end{axis}
  \end{tikzpicture}
  \caption{Overall system performance at different problem sizes using 40 CPUs and 4 GPUs.}
  \label{fig::micro_benchmark_overall_performance}
\end{figure}

% http://www.alcula.com/calculators/statistics/box-plot/
\begin{figure}[h]
  \centering
  \pgfplotsset{
    title style={font=\LARGE},
    label style={font=\large},
  }
  % desktop
  \begin{tikzpicture}[scale=0.50]
    \begin{axis}
    [
      xtick={1,2,3,4,5},
      xticklabels={Taskflow, oneTBB, StarPU, HPX, OpenMP},
      xmajorgrids=true,
      x tick label style={rotate=90,anchor=east},
      boxplot/draw direction=y,
      title=Runtime (5K tasks),
      ylabel=Runtime (ms),
    ]
    %tf
    \addplot+[
    boxplot prepared={
      lower whisker=778.211,
      lower quartile=786.76,
      median=793.33,
      upper quartile=802.4645,
      upper whisker=812.616
    },
    ] coordinates {};
    %tbb
    \addplot+[ 
    boxplot prepared={
      lower whisker=719.381,
      lower quartile=728.9465,
      median=770.8885,
      upper quartile=895.3375,
      upper whisker=993.14
    },
    ] coordinates {};
    %starpu
    \addplot+ [color=cyan] [ 
    boxplot prepared={
      lower whisker=1237.23,
      lower quartile=1248.045,
      median=1246.12,
      upper quartile=1248.045,
      upper whisker=1291.98
    },
    ] coordinates {};
    %hpx
    \addplot+[ 
    boxplot prepared={
      lower whisker=985.465,
      lower quartile=1046.4695,
      median=1155.41,
      upper quartile=1209.3275,
      upper whisker=1296.44
    },
    ] coordinates {};
    %omp
    \addplot+ [color=brown][
    boxplot prepared={
      lower whisker=722.82,
      lower quartile=759.8735,
      median=806.294,
      upper quartile=901.709,
      upper whisker=1099.51
    },
    ] coordinates {};
    %\legend{Taskflow, oneTBB, StarPU, HPX, OpenMP}
  \end{axis}
  \end{tikzpicture}
  % server
  \begin{tikzpicture}[scale=0.50]
    \begin{axis}
    [
      xtick={1,2,3,4,5},
      xticklabels={Taskflow, oneTBB, StarPU, HPX, OpenMP},
      xmajorgrids=true,
      x tick label style={rotate=90,anchor=east},
      boxplot/draw direction=y,
      title=Runtime (20K tasks),
      ylabel=Runtime (ms),
    ]
    %tf
    \addplot+[
    boxplot prepared={
      lower whisker=4411.54,
      lower quartile=4432.5925,
      median=4491.37,
      upper quartile=4500.8025,
      upper whisker=4524.31
    },
    ] coordinates {};
    %tbb
    \addplot+[ 
    boxplot prepared={
      lower whisker=5078.38,
      lower quartile=5114.3125,
      median=5235.825,
      upper quartile=5354.9975,
      upper whisker=5398.15
    },
    ] coordinates {};
    %starpu
    \addplot+ [color=cyan] [ 
    boxplot prepared={
      lower whisker=5247.38,
      lower quartile=5397.8575,
      median=5533.905,
      upper quartile=5727.895,
      upper whisker=5900.47
    },
    ] coordinates {};
    %hpx
    \addplot+[ 
    boxplot prepared={
      lower whisker=5581.73,
      lower quartile=5672.615,
      median=5678.92,
      upper quartile=5882.1675,
      upper whisker=5967.13
    },
    ] coordinates {};
    %omp
    \addplot+ [color=brown][
    boxplot prepared={
      lower whisker=4883.89,
      lower quartile=5369.155,
      median=5577.67,
      upper quartile=5746.05,
      upper whisker=5820.84
    },
    ] coordinates {};
    %\legend{Taskflow, oneTBB, StarPU, HPX, OpenMP}
  \end{axis}
  \end{tikzpicture}
  \caption{Runtime distribution of two task graphs.}
  \label{fig::micro_benchmark_boxplot}
\end{figure}
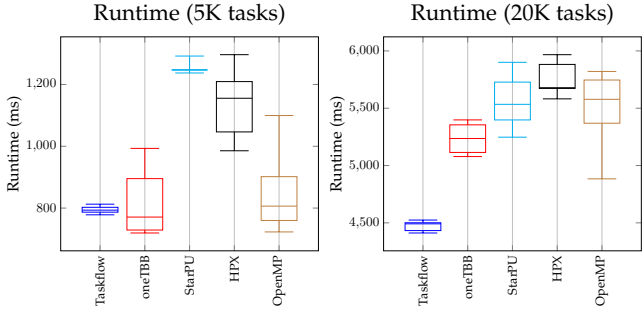

Figure \ref{fig::micro_benchmark_boxplot} displays the runtime distribution
of each method over a hundred runs of two HTDGs, 5K and 20K tasks.
The boxplot shows that the runtime of Taskflow is more consistent than others
and has the smallest variation.
We attribute this result to the design of our scheduler,
which effectively separates task execution into CPU and GPU workers
and dynamically balances cross-domain wasteful steals
with task parallelism.

\begin{figure}[!h]
  \centering
  \pgfplotsset{
    title style={font=\LARGE},
    label style={font=\large},
  }
  \begin{tikzpicture}[scale=0.50]
    \begin{axis}[
      %title= Corun (40 CPUs 4 GPUs),
      title= Corun (20K tasks),
      ylabel=Throughput,
      xlabel=Number of Coruns,
      legend pos=north west,
      ymin=0,
      ymax=2.5
      %y filter/.code={\pgfmathparse{#1/1000}\pgfmathresult}
    ]
    \addplot+ table[x=corun,y=tf,col sep=space]{Fig/micro_benchmark/throughput-server-40C4G.txt};
    \addplot+ table[x=corun,y=oneTBB,col sep=space]{Fig/micro_benchmark/throughput-server-40C4G.txt};
    \addplot+ [mark=diamond,color=cyan] table[x=corun,y=starpu,col sep=space]{Fig/micro_benchmark/throughput-server-40C4G.txt};
    \addplot+ table[x=corun,y=hpx,col sep=space]{Fig/micro_benchmark/throughput-server-40C4G.txt};
    \addplot+ [mark=o,color=brown] table[x=corun,y=omp,col sep=space]{Fig/micro_benchmark/throughput-server-40C4G.txt};
    \legend{Taskflow, oneTBB, StarPU, HPX, OpenMP}
    \end{axis}
  \end{tikzpicture}
  \begin{tikzpicture}[scale=0.50]
    \begin{axis}[
      title=Consumed CPU Resource,
      ylabel=Utilization,
      xlabel=Graph Size ($|V|+|E|$),
      legend style={at={(0.53,0.6)},anchor=west},
      %legend pos=north west,
    ]
    \addplot+ table[x=size,y=tf,col sep=space]{Fig/micro_benchmark/utilization-server-40C4G.txt};
    \addplot+ table[x=size,y=oneTBB,col sep=space]{Fig/micro_benchmark/utilization-server-40C4G.txt};
    \addplot+ [mark=diamond,color=cyan] table[x=size,y=starpu,col sep=space]{Fig/micro_benchmark/utilization-server-40C4G.txt};
    \addplot+ table[x=size,y=hpx,col sep=space]{Fig/micro_benchmark/utilization-server-40C4G.txt};
    \addplot+ [mark=o,color=brown] table[x=size,y=omp,col sep=space]{Fig/micro_benchmark/utilization-server-40C4G.txt};
    \legend{Taskflow, oneTBB, StarPU, HPX, OpenMP}
    \end{axis}
  \end{tikzpicture}
  \caption{Throughput of corunning task graphs and CPU utilization at different problem sizes under 40 CPUs and 4 GPUs.}
  \label{fig::micro_benchmark_throughput}
\end{figure}
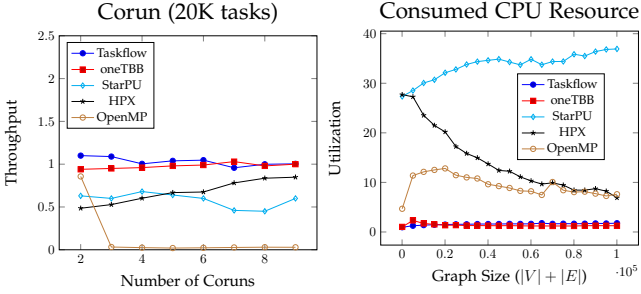

Finally, we compare the throughput of each method on corunning HTDGs.
This experiment emulates a server-like environment
where multiple programs run simultaneously
on the same machine to compete for the same resources.
The effect of worker management propagates to all parallel processes.
We consider up to nine corun processes 
each executing the same HTDG of 20K tasks.
We use the weighted speedup~\cite{BWS} to measure the system throughput.
Figure \ref{fig::micro_benchmark_throughput} compares the throughput 
of each method and relates the result to the CPU utilization.
Both Taskflow and oneTBB produce significantly higher throughput than others.
Our throughput is slightly better than oneTBB by 1--15\% except for seven coruns.
The result can be interpreted by the CPU utilization plot, 
reported by \texttt{perf stat}.
We can see both Taskflow and oneTBB make effective use of CPU resources
to schedule tasks.
However, StarPU keeps workers busy most of the time and has no mechanism
to dynamically control thread resources with task parallelism.

Since both oneTBB and StarPU provides explicit task graph programming models
and work-stealing for dynamic load balancing,
we will focus on comparing Taskflow with oneTBB and StarPU for the next two real
workloads.

\subsection{VLSI Incremental Timing Analysis}

As part of our DARPA project,
we applied Taskflow to solve a VLSI incremental static timing analysis (STA) problem
in an optimization loop.
The goal is to optimize the timing landscape of a circuit design
by iteratively applying \textit{design transforms} (e.g., gate sizing, buffer insertion)
and evaluating the timing improvement until all data paths are passing,
aka \textit{timing closure}.
Achieving timing closure is one of the most time-consuming steps in the 
VLSI design closure flow process because optimization algorithms can call a timer 
millions or even billions of times to incrementally analyze the timing improvement 
of a design transform.
We consider the GPU-accelerated critical path analysis algorithm~\cite{Huang_21_03}
and run it across one thousand incremental iterations
based on the design transforms given by TAU 2015 Contest~\cite{Hu_15_01}.
The data is generated by an industrial tool to evaluate the performance
of an incremental timing algorithm.
Each incremental iteration corresponds to at least one design modifier
followed by a timing report operation to trigger incremental timing update 
of the timer.

\begin{figure}[!h]
  \centering
  \centerline{\includegraphics[width=1.\columnwidth]{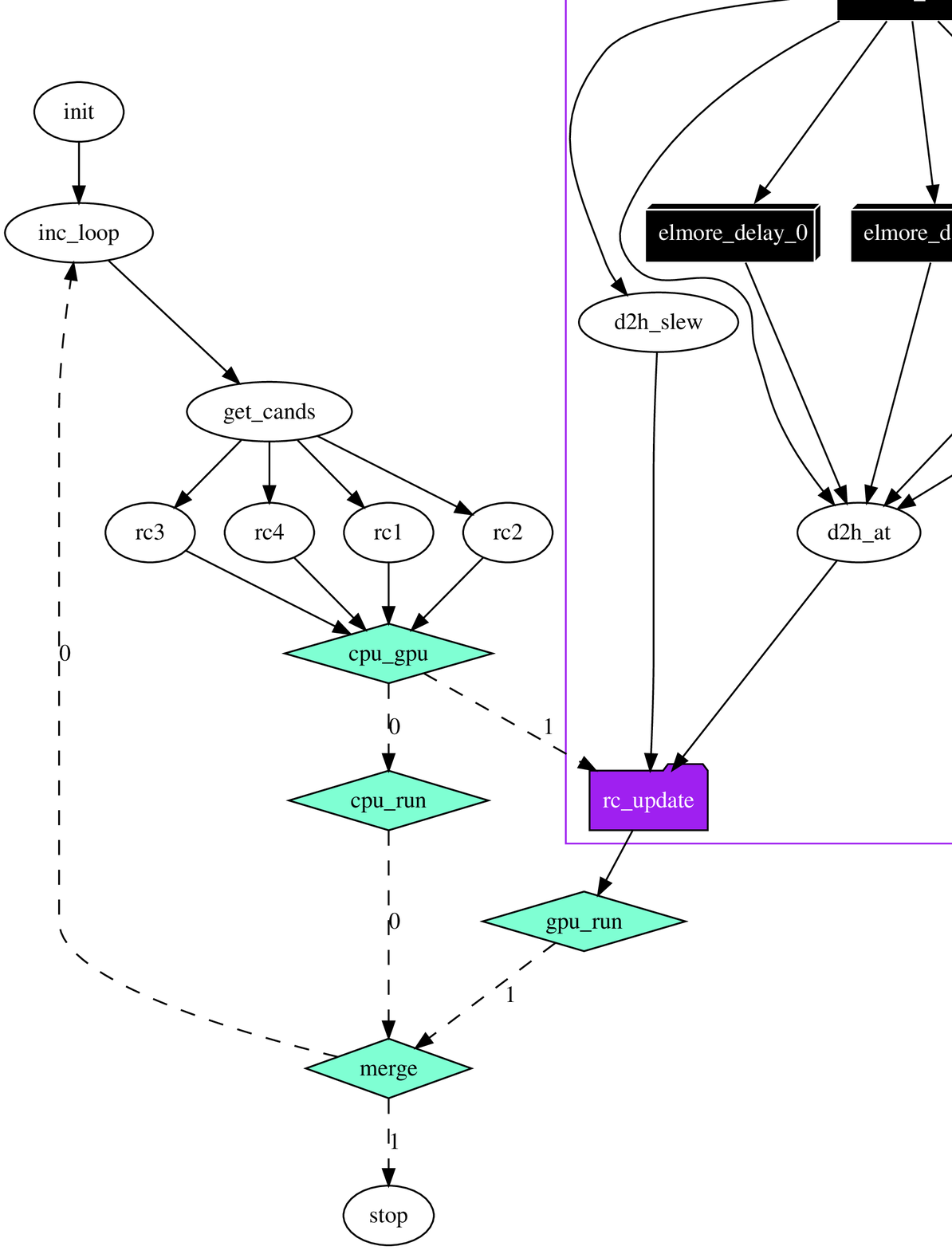}}
  \caption{A partial HTDG of 1 cudaFlow task (purple box), 4 condition tasks (green diamond), and 8 static tasks (other else) for one iteration of timing-driven optimization.}
  \label{fig::sta}
\end{figure}

Figure \ref{fig::sta} shows a partial Taskflow graph of our implementation.
One condition task forms a loop to implement iterative timing updates and
the other three condition tasks branch the execution to either CPU-based
timing update (over 10K tasks) or GPU-based timing update (cudaFlow tasks).
The motivation here is to adapt the timing update to
different incrementalities.
For example, if a design transform introduces only a few hundreds of nodes
to update, there is no need to offload the computation to GPUs
due to insufficient amount of data parallelism.
The cudaFlow task composes over 1K operations to compute large interconnect delays,
which often involves several gigabytes of parasitic data.
Since oneTBB FlowGraph and StarPU do not support control flow,
we unroll their task graphs across fixed-length iterations found in hindsight
to avoid expensive synchronization at each iteration;
the number of concatenated graphs is equal to the number of iterations.

\begin{table}[!h]
\centering
\caption{Programming Effort on VLSI Timing Closure}
\vspace{-2mm}
\label{tab::programming_effort_on_timing_closure}
\resizebox{\columnwidth}{!}{%
\begin{tabular}{|c|c|c|c|c|c|c|}
\hline
\textbf{Method} & \textbf{LOC} & \textbf{\#Tokens} & \textbf{CC} & \textbf{WCC} & \textbf{Dev} & \textbf{Bug} \\
\hline                                                        
Taskflow & 3176 & 5989  & 30  & 67 & 3.9 & 13\%  \\
oneTBB   & 4671 & 8713  & 41  & 92 & 6.1 & 51\%  \\
StarPU   & 5643 & 13952 & 46  & 98 & 4.3 & 38\%  \\
%StarPU   & 1279  & 8136 & 41 & 90 & 0.87 & \$29686  \\
\hline
\end{tabular}
}
%\begin{tablenotes}[flushleft]
\begin{tablenotes}
\footnotesize
\item [1]\textbf{CC}: maximum cyclomatic complexity in a single function
\item [2]\textbf{WCC}: weighted cyclomatic complexity of the program
\item [3]\textbf{Dev}: hours to complete the implementation
\item [4]\textbf{Bug}: time spent on the debugging versus coding task graphs
\end{tablenotes}
\end{table}

Table \ref{tab::programming_effort_on_timing_closure} 
compares the programming effort between Taskflow, oneTBB, 
and StarPU.
In a rough view, the implementation complexity using Taskflow
is much less than that of oneTBB and StarPU.
The amount of time spent on implementing the algorithm is about 3.9 hours
for Taskflow, 6.1 hours for oneTBB, and 4.3 hours for StarPU.
It takes 3--4$\times$ more time to debug oneTBB and StarPU than Taskflow,
mostly on control flow.
Interestingly, 
while StarPU involves more LOC and higher cyclomatic complexity than oneTBB,
our programmers found StarPU easier to write due to its C-styled interface.
Although there is no standard way to conclude the programmability of a library,
we believe our measurement highlights the 
expressiveness of Taskflow and its ease of use from a real user's perspective.

\begin{figure}[!h]
  \centering
  \pgfplotsset{
    title style={font=\LARGE},
    label style={font=\large},
  }
  \begin{tikzpicture}[scale=0.49]
    \begin{axis}[
      %title=Runtime (40 CPUs 4 GPUs),
      title=Runtime (40 CPUs 1 GPU),
      ylabel=Runtime (m),
      xlabel=Iterations,
      legend pos=north west,
      %y filter/.code={\pgfmathparse{#1/1000}\pgfmathresult}
    ]
    \addplot+ table[x=iter,y=tf-8C1G,col sep=space]{Fig/placement/adaptec1_problem_scalability.txt};
    \addplot+ table[x=iter,y=oneTBB-8C1G,col sep=space]{Fig/placement/adaptec1_problem_scalability.txt};
    \addplot+ [mark=diamond,color=cyan]table[x=iter,y=starpu-8C1G,col sep=space]{Fig/placement/adaptec1_problem_scalability.txt};
    \legend{Taskflow, oneTBB, StarPU}
    \end{axis}
  \end{tikzpicture}
  \begin{tikzpicture}[scale=0.49]
    \begin{axis}[
      %title=Runtime (40 CPUs 4 GPUs),
      title=Speed-up (1000 iterations),
      ylabel=Ratio,
      xlabel=CPU Count (under 1 GPU),
      legend pos=south east,
      y filter/.code={\pgfmathparse{113/#1}\pgfmathresult}
    ]
    \addplot+ table[x=CPU,y=tf,col sep=space]{Fig/placement/adaptec1_cpu_scalability.txt};
    \addplot+ table[x=CPU,y=oneTBB,col sep=space]{Fig/placement/adaptec1_cpu_scalability.txt};
    \addplot+ [mark=diamond,color=cyan]table[x=CPU,y=starpu,col sep=space]{Fig/placement/adaptec1_cpu_scalability.txt};
    \legend{Taskflow, oneTBB, StarPU}
    \end{axis}
  \end{tikzpicture}
  %memory
  \begin{tikzpicture}[scale=0.49]
    \begin{axis}[
      %title=Runtime (40 CPUs 4 GPUs),
      title=Memory (40 CPUs 1 GPU),
      ylabel=Maximum RSS (GB),
      xlabel=Iterations,
      legend pos=north west,
      %legend style={at={(0.49,0.58)},anchor=west},
      %y filter/.code={\pgfmathparse{#1/1000}\pgfmathresult}
    ]
    \addplot+ table[x=iter,y=tf-mem-8C1G,col sep=space]{Fig/placement/adaptec1_problem_scalability.txt};
    \addplot+ table[x=iter,y=oneTBB-mem-8C1G,col sep=space]{Fig/placement/adaptec1_problem_scalability.txt};
    \addplot+ [mark=diamond,color=cyan]table[x=iter,y=starpu-mem-8C1G,col sep=space]{Fig/placement/adaptec1_problem_scalability.txt};
    \legend{Taskflow, oneTBB, StarPU}
    \end{axis}
  \end{tikzpicture}
  %memory
  \begin{tikzpicture}[scale=0.49]
    \begin{axis}[
      %title=Runtime (40 CPUs 4 GPUs),
      title=Memory (1000 iterations),
      ylabel=Maximum RSS (GB),
      xlabel=CPU Count (under 1 GPU),
      legend pos=north west,
      %y filter/.code={\pgfmathparse{#1/1000}\pgfmathresult}
    ]
    \addplot+ table[x=CPU,y=tf-mem,col sep=space]{Fig/placement/adaptec1_cpu_scalability.txt};
    \addplot+ table[x=CPU,y=oneTBB-mem,col sep=space]{Fig/placement/adaptec1_cpu_scalability.txt};
    \addplot+ [mark=diamond,color=cyan]table[x=CPU,y=starpu-mem,col sep=space]{Fig/placement/adaptec1_cpu_scalability.txt};
    \legend{Taskflow, oneTBB, StarPU}
    \end{axis}
  \end{tikzpicture}
  %power
  \begin{tikzpicture}[scale=0.49]
    \begin{axis}[
      %title=Runtime (40 CPUs 4 GPUs),
      title=Power (40 CPUs 1 GPU),
      ylabel=Power (W),
      xlabel=Iterations,
      %legend pos=north west,
      %ymax=160
      legend style={at={(0.25,0.60)},anchor=west}
      %y filter/.code={\pgfmathparse{#1/1000}\pgfmathresult}
    ]
    \addplot+ table[x=iter,y=tf-p,col sep=space]{Fig/placement/adaptec1_problem_scalability.txt};
    \addplot+ table[x=iter,y=oneTBB-p,col sep=space]{Fig/placement/adaptec1_problem_scalability.txt};
    \addplot+ [mark=diamond,color=cyan]table[x=iter,y=starpu-p,col sep=space]{Fig/placement/adaptec1_problem_scalability.txt};
    \legend{Taskflow, oneTBB, StarPU}
    \end{axis}
  \end{tikzpicture}
  \begin{tikzpicture}[scale=0.49]
    \begin{axis}[
      %title=Runtime (40 CPUs 4 GPUs),
      title=Power (1000 iterations),
      ylabel=Power (W),
      xlabel=CPU Count (under 1 GPU),
      legend pos=south east,
      %y filter/.code={\pgfmathparse{#1/1000}\pgfmathresult}
    ]
    \addplot+ table[x=CPU,y=tf-p,col sep=space]{Fig/placement/adaptec1_cpu_scalability.txt};
    \addplot+ table[x=CPU,y=oneTBB-p,col sep=space]{Fig/placement/adaptec1_cpu_scalability.txt};
    \addplot+ [mark=diamond,color=cyan]table[x=CPU,y=starpu-p,col sep=space]{Fig/placement/adaptec1_cpu_scalability.txt};
    \legend{Taskflow, oneTBB, StarPU}
    \end{axis}
  \end{tikzpicture}
  \caption{Runtime, memory, and power data of 1000 incremental timing iterations (up to 11K tasks and 17K dependencies per iteration) on a large design of 1.6M gates.}
  \label{fig::adaptec1_scalability}
\end{figure}

The overall performance is shown in Figure \ref{fig::adaptec1_scalability}.
Using 40 CPUs and 1 GPU,
Taskflow is consistently faster than oneTBB and StarPU
across all incremental timing iterations.
The gap continues to enlarge as increasing iteration numbers;
at 100 and 1000 iterations, Taskflow reaches the goal in 3.45 and 39.11 minutes,
whereas oneTBB requires 5.67 and 4.76 minutes and StarPU requires 48.51 and 55.43 minutes, 
respectively.
Note that the gain is significant because a typical timing closure algorithm
can invoke millions to billions of iterations 
that take several hours to finish~\cite{Huang_20_01}.
We observed similar results at other CPU numbers;
in terms of the runtime speed-up over 1 CPU (all finish in 113 minutes),
Taskflow is always faster than oneTBB and StarPU, regardless of the CPU count.
Speed-up of Taskflow saturates at about 16 CPUs ($3\times$), 
primarily due to the inherent irregularity of the algorithm
(see Figure \ref{fig::sta}).
The memory footprint (middle of Figure \ref{fig::adaptec1_scalability}) 
shows the benefit of our conditional tasking.
By reusing condition tasks in the incremental timing loop,
we do not suffer significant memory growth as oneTBB and StarPU.
On a vertical scale,
increasing the number of CPUs bumps up the memory usage of both methods,
but Taskflow consumes much less
because we use only simple atomic operations to control
wasteful steals.
In terms of energy efficiency (bottom of Figure \ref{fig::adaptec1_scalability},
measured on all cores plus LLC using \texttt{power/energy-pkg}~\cite{PerfStat}),
our scheduler is very power-efficient
in completing the timing analysis workload, regardless of iterations and CPU numbers.
Beyond 16 CPUs where performance saturates, Taskflow does not suffer from
increasing power as oneTBB and StarPU, because
our scheduler efficiently balances the number of workers with dynamic
task parallelism.

\begin{figure}[!h]
  \centering
  \pgfplotsset{
    title style={font=\LARGE},
    label style={font=\large},
  }
  \begin{tikzpicture}[scale=0.49]
    \begin{axis}[
      title= Corun (500 iterations),
      ylabel=Throughput,
      xlabel=Number of Coruns,
      legend pos=north west,
      ymin=0,
      ymax=5.5
      %y filter/.code={\pgfmathparse{#1/1000}\pgfmathresult}
    ]
    \addplot+ table[x=corun,y=tf-50,col sep=space]{Fig/placement/throughput.txt};
    \addplot+ table[x=corun,y=oneTBB-50,col sep=space]{Fig/placement/throughput.txt};
    \addplot+ [mark=diamond,color=cyan] table[x=corun,y=starpu-50,col sep=space]{Fig/placement/throughput.txt};
    \legend{Taskflow, oneTBB, StarPU}
    \end{axis}
  \end{tikzpicture}
  \begin{tikzpicture}[scale=0.49]
    \begin{axis}[
      title= Corun (1000 Iterations),
      ylabel=Throughput,
      xlabel=Number of Coruns,
      legend pos=south east,
      ymin=0,
      %legend style={at={(0.53,0.56)},anchor=west}
      %ymax=2.5
      %y filter/.code={\pgfmathparse{#1/1000}\pgfmathresult}
    ]
    \addplot+ table[x=corun,y=tf-100,col sep=space]{Fig/placement/throughput.txt};
    \addplot+ table[x=corun,y=oneTBB-100,col sep=space]{Fig/placement/throughput.txt};
    \addplot+ [mark=diamond,color=cyan] table[x=corun,y=starpu-100,col sep=space]{Fig/placement/throughput.txt};
    \legend{Taskflow, oneTBB, StarPU} 
    \end{axis}
  \end{tikzpicture}
  \caption{Throughput of corunning timing analysis  workloads on two iteration numbers using 40 CPUs and 1 GPU.}
  \label{fig::placement_throughput}
\end{figure}
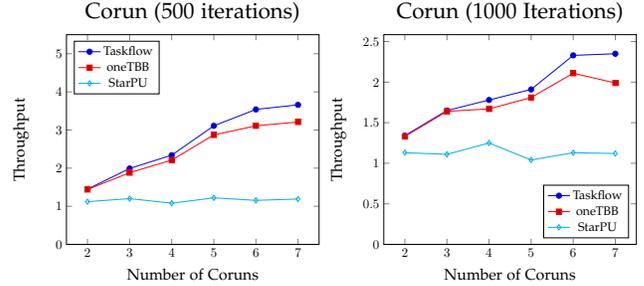

We next compare the throughput of each implementation
by corunning the same program.
Corunning programs is a common strategy for optimization
tools to search for the best parameters.
The effect of worker management propagates to all simultaneous processes.
Thus, the throughput can be a good measurement for the 
inter-operability of a scheduling algorithm.
We corun the same timing analysis program up to seven processes 
that compete for 40 CPUs and 1 GPU.
We use the \textit{weighted speedup} to measure the system throughput,
which is the sum of the individual speedup of each process 
over a baseline execution time~\cite{BWS}.
A throughput of one implies that the corun's throughput is the
same as if the processes were run consecutively.
Figure \ref{fig::placement_throughput} plots the throughput across nine coruns
at two iteration numbers.
Both Taskflow and oneTBB achieve decent throughput greater than one
and are significantly better than StarPU.
We found StarPU keep workers busy most of the time and has no mechanism
to balance the number of workers with dynamically generated task parallelism.
For irregular HTDGs akin to Figure \ref{fig::sta},
worker management is critical for corunning processes.
When task parallelism becomes sparse, especially around 
the decision-making point of an iterative control flow,
our scheduler can adaptively reduce the wasteful steals 
based on the active worker count,
and we offer a stronger bound than oneTBB (Theorem \ref{theorem::wasteful_steals}).
Saved wasteful resources can thus be used by other concurrent programs
to increase the throughput.
%
%
%%Both Taskflow and oneTBB achieve higher throughput than StarPU.
%%At the largest problem size,
%%Taskflow outperforms oneTBB and StarPU across all coruns.
%%The result again highlights the strength of our scheduler, 
%%which dynamically balances the worker count with available task parallelism.

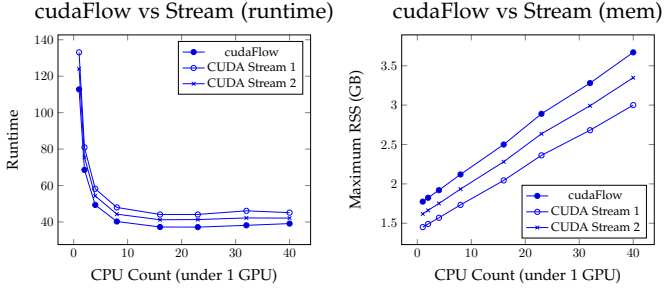
\begin{figure}[!h]
  \centering
  \pgfplotsset{
    title style={font=\LARGE},
    label style={font=\large},
  }
  \begin{tikzpicture}[scale=0.49]
    \begin{axis}[
      title= cudaFlow vs Stream (runtime),
      ylabel=Runtime,
      xlabel=CPU Count (under 1 GPU),
      legend pos=north east,
      %legend style={at={(0.50,0.56)},anchor=west}
      %y filter/.code={\pgfmathparse{#1/1000}\pgfmathresult}
    ]
    \addplot+ table[x=cpu,y=tf-cpu,col sep=space]{Fig/placement/cudaflow_overhead.txt};
    \addplot+ [mark=o,color=blue] table[x=cpu,y=s1-cpu,col sep=space]{Fig/placement/cudaflow_overhead.txt};
    \addplot+ [mark=x,color=blue] table[x=cpu,y=s2-cpu,col sep=space]{Fig/placement/cudaflow_overhead.txt};
    \legend{cudaFlow, CUDA Stream 1, CUDA Stream 2}
    \end{axis}
  \end{tikzpicture}
  \begin{tikzpicture}[scale=0.49]
    \begin{axis}[
      title= cudaFlow vs Stream (mem),
      xlabel=CPU Count (under 1 GPU),
      ylabel=Maximum RSS (GB),
      legend pos=south east,
      %legend style={at={(0.53,0.56)},anchor=west}
      %ymax=2.5
      %y filter/.code={\pgfmathparse{#1/1000}\pgfmathresult}
    ]
    \addplot+ table[x=cpu,y=tf-mem,col sep=space]{Fig/placement/cudaflow_overhead.txt};
    \addplot+ [mark=o,color=blue]table[x=cpu,y=s1-mem,col sep=space]{Fig/placement/cudaflow_overhead.txt};
    \addplot+ [mark=x,color=blue] table[x=cpu,y=s2-mem,col sep=space]{Fig/placement/cudaflow_overhead.txt};
    \legend{cudaFlow, CUDA Stream 1, CUDA Stream 2}
    \end{axis}
  \end{tikzpicture}
  \caption{Comparison of runtime and memory between cudaFlow (CUDA Graph) and stream-based execution in the VLSI incremental timing analysis workload.}
  \label{fig::sta_cudaflow_overhead}
\end{figure}

Figure \ref{fig::sta_cudaflow_overhead} shows the performance advantage of CUDA Graph
and its cost in handling this large GPU-accelerated timing analysis workloads.
The line \textit{cudaFlow} represents our default implementation 
using explicit CUDA graph construction.
The other two lines represent the implementation of the same GPU task graph but using
stream and event insertions (i.e., non-CUDA Graph).
As partially shown in Figure \ref{fig::sta},
our cudaFlow composes over 1K dependent GPU operations to compute the interconnect 
delays.
For large GPU workloads like this, the benefit of CUDA Graph is clear;
we observed 9--17\% runtime speed-up over stream-based implementations.
The performance improvement mostly comes from reduced kernel call overheads
and graph-level scheduling optimizations by CUDA runtime.
Despite the improved performance, cudaFlow incurs higher memory costs
because CUDA Graph stores all kernel parameters in advance for optimization.
For instance, creating a node in CUDA Graph can take over 300 bytes of opaque data structures.

\subsection{Large Sparse Neural Network Inference}

We applied Taskflow to solve the 
MIT/Amazon Large Sparse Deep Neural Network (LSDNN) Inference Challenge,
a recent effort aimed at new computing methods for sparse AI analytics~\cite{Kepner_19_01}.
Each dataset comprises a sparse matrix of the input data for the network,
1920 layers of neurons stored in sparse matrices, truth categories, and the bias 
values used for the inference.
Preloading the network to the GPU is impossible. Thus, we implement a 
model decomposition-based kernel algorithm inspired by~\cite{Bisson_19_01}
and construct an end-to-end HTDG for the entire inference workload.
Unlike VLSI incremental timing analysis,
this workload is both CPU- and GPU-heavy.
Figure \ref{fig::snig} illustrates a partial HTDG.
We create up to 4 cudaFlows on 4 GPUs.
Each cudaFlow contains more than 2K GPU operations to run 
partitioned matrices in an iterative data dispatching loop
formed by a condition task.
Other CPU tasks evaluate the results with a golden reference.
Since oneTBB FlowGraph and StarPU do not support in-graph control flow,
we unroll their task graph across fixed-length iterations found offline.

\begin{figure}[h]
  \centering
  \centerline{\includegraphics[width=1.\columnwidth]{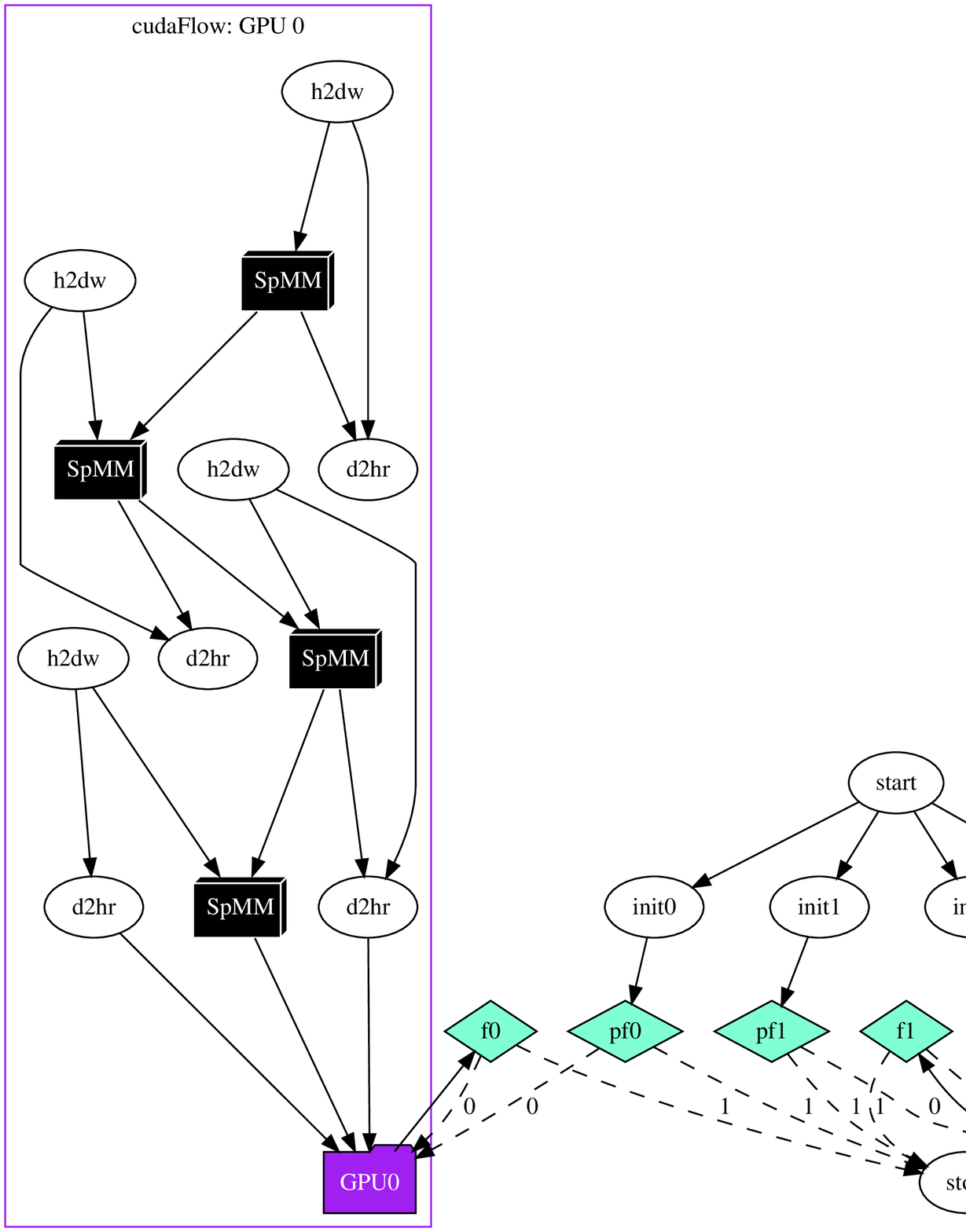}}
  \caption{A partial HTDG of 4 cudaFlows (purple boxes), 8 conditioned cycles (green diamonds),
  and 6 static tasks (other else) for the inference workload.}
  \label{fig::snig}
\end{figure}

%\begin{table}[!h]
%\centering
%\caption{Programming Effort on LSDNN Inference}
%\vspace{-2mm}
%\label{tab::programming_effort_on_lsdnn_inference}
%\resizebox{\columnwidth}{!}{%
%\begin{tabular}{|c|c|c|c|c|c|c|}
%\hline
%\textbf{Method} & \textbf{LOC} & \textbf{\#Tokens} & \textbf{CC} & \textbf{WCC} & \textbf{Dev} & \textbf{Bug} \\
%\hline                                                        
%Taskflow & 658  & 1981 & 6  & 18 & 1.1 & 13.9\%  \\
%oneTBB   & 1342 & 3812 & 12 & 26 & 2.5 & 45.5\%  \\
%%StarPU       & 467  & 2845 & 12 & 25 & 0.50 & \$12171 \\
%\hline
%\end{tabular}
%}
%\begin{tablenotes}[flushleft] 
%\item [1]\textbf{CC}: maximum cyclomatic complexity in a single function
%\item [2]\textbf{WCC}: weighted cyclomatic complexity of the program
%\item [3]\textbf{Dev}: hours to complete the implementation
%\item [4]\textbf{Prod}: time spent on the actual programming as opposed to debugging
%\end{tablenotes}
%\end{table}

\begin{figure}[!h]
  \centering
  \pgfplotsset{
    title style={font=\LARGE},
    label style={font=\large},
  }
  \begin{tikzpicture}[scale=0.50]
    \begin{axis}[
      title= Runtime (40 CPUs),
      ylabel=Runtime (ms),
      xlabel=Number of GPUs,
      legend pos=south west,
      xtick={1, 2, 3, 4},
      %legend style={at={(0.50,0.56)},anchor=west}
      %y filter/.code={\pgfmathparse{#1/1000}\pgfmathresult}
    ]
    \addplot+ table[x=gpus,y=tf,col sep=space]{Fig/dnn/scalability-40C.txt};
    \addplot+ table[x=gpus,y=oneTBB,col sep=space]{Fig/dnn/scalability-40C.txt};
    \addplot+ [mark=diamond,color=cyan] table[x=gpus,y=starpu,col sep=space]{Fig/dnn/scalability-40C.txt};
    \legend{Taskflow, oneTBB, StarPU}
    \end{axis}
  \end{tikzpicture}
  \begin{tikzpicture}[scale=0.50]
    \begin{axis}[
      title= Runtime (4 GPUs),
      ylabel=Runtime (ms),
      xlabel=Number of CPUs,
      xtick={1, 2, 4, 8, 16, 32},
      legend style={at={(0.25,0.8)},anchor=west}
      %y filter/.code={\pgfmathparse{#1/1000}\pgfmathresult}
    ]
    \addplot+ table[x=cpus,y=tf,col sep=space]{Fig/dnn/scalability-4G.txt};
    \addplot+ table[x=cpus,y=oneTBB,col sep=space]{Fig/dnn/scalability-4G.txt};
    \addplot+ [mark=diamond,color=cyan] table[x=cpus,y=starpu,col sep=space]{Fig/dnn/scalability-4G.txt};
    \legend{Taskflow, oneTBB, StarPU}
    \end{axis}
  \end{tikzpicture}
  %memory
  \begin{tikzpicture}[scale=0.50]
    \begin{axis}[
      title=Memory (40 CPUs),
      ylabel=MAximum RSS (GB),
      xlabel=Number of GPUs,
      legend pos=north west,
      xtick={1, 2, 3, 4},
      legend style={at={(0.50,0.55)},anchor=west},
      y filter/.code={\pgfmathparse{#1/1000000}\pgfmathresult}
    ]
    \addplot+ table[x=gpus,y=tf-m,col sep=space]{Fig/dnn/scalability-40C.txt};
    \addplot+ table[x=gpus,y=oneTBB-m,col sep=space]{Fig/dnn/scalability-40C.txt};
    \addplot+ [mark=diamond,color=cyan] table[x=gpus,y=starpu-m,col sep=space]{Fig/dnn/scalability-40C.txt};
    \legend{Taskflow, oneTBB, StarPU}
    \end{axis}
  \end{tikzpicture}
  %memory
  \begin{tikzpicture}[scale=0.50]
    \begin{axis}[
      title=Memory (4 GPUs),
      ylabel=MAximum RSS (GB),
      xlabel=Number of CPUs,
      legend pos=north west,
      xtick={1, 2, 4, 8, 16, 32},
      legend style={at={(0.50,0.45)},anchor=west},
      y filter/.code={\pgfmathparse{#1/1000000}\pgfmathresult}
    ]
    \addplot+ table[x=cpus,y=tf-m,col sep=space]{Fig/dnn/scalability-4G.txt};
    \addplot+ table[x=cpus,y=oneTBB-m,col sep=space]{Fig/dnn/scalability-4G.txt};
    \addplot+ [mark=diamond,color=cyan] table[x=cpus,y=starpu-m,col sep=space]{Fig/dnn/scalability-4G.txt};
    \legend{Taskflow, oneTBB, StarPU}
    \end{axis}
  \end{tikzpicture}
  \caption{Runtime and memory data of the LSDNN (1920 layers, 4096 neurons per layer) under different CPU and GPU numbers}
  \label{fig::lsdnn_scalability}
\end{figure}

%Table \ref{tab::programming_effort_on_lsdnn_inference} compares the programming effort
%between Taskflow and oneTBB.
%We use the same measurement as the experiment of VLSI placement.
%Taskflow has the least amount of coding effort and cyclomatic complexity.
%It takes only 1 hour to finish the Taskflow-based implementation 
%but 2.5 hours for oneTBB.
%The time spent on the actual development of task graph code as apposed to debugging
%is more productive than oneTBB (87.1\% vs 54.5\%).
%Likewise, this result can be subjective but it highlights the expressiveness 
%of Taskflow from real user experience.

Figure \ref{fig::lsdnn_scalability} compares the performance
of solving a 1920-layered LSDNN each of 4096 neurons
under different CPU and GPU numbers.
Taskflow outperforms oneTBB and StarPU in all aspects.
Both our runtime and memory scale better regardless of the CPU and GPU numbers.
Using 4 GPUs, when performance saturates at 4 CPUs,
we do not suffer from further runtime growth as oneTBB and StarPU.
This is because our work-stealing algorithm more efficiently control wasteful steals
upon available task parallelism.
On the other hand, our memory usage is 1.5-1.7$\times$ less than oneTBB and StarPU.
This result highlights the benefit of our condition task, 
which integrates iterative control flow into a cyclic HTDG,
rather than unrolling it statically across iterations.
%In fact, the lack of an efficient control-flow interface
%in existing systems has been identified as one main weakness
%in parallelizing large-scale machine learning~\cite{Yu_18_01}.
%We believe our results can complement the current state-of-the-art.

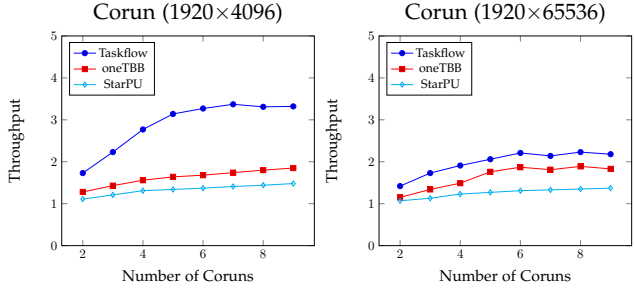
\begin{figure}[!h]
  \centering
  \pgfplotsset{
    title style={font=\LARGE},
    label style={font=\large},
  }
  \begin{tikzpicture}[scale=0.49]
    \begin{axis}[
      title= Corun (1920$\times$4096),
      ylabel=Throughput,
      xlabel=Number of Coruns,
      legend pos=north west,
      ymin=0,
      ymax=5
      %y filter/.code={\pgfmathparse{#1/1000}\pgfmathresult}
    ]
    \addplot+ table[x=corun,y=tf-50,col sep=space]{Fig/dnn/throughput.txt};
    \addplot+ table[x=corun,y=oneTBB-50,col sep=space]{Fig/dnn/throughput.txt};
    \addplot+ [mark=diamond,color=cyan] table[x=corun,y=starpu-50,col sep=space]{Fig/dnn/throughput.txt};
    \legend{Taskflow, oneTBB, StarPU}
    \end{axis}
  \end{tikzpicture}
  \begin{tikzpicture}[scale=0.49]
    \begin{axis}[
      title= Corun (1920$\times$65536),
      ylabel=Throughput,
      xlabel=Number of Coruns,
      legend pos=north west,
      ymin=0,
      ymax=5,
    ]
    \addplot+ table[x=corun,y=tf-100,col sep=space]{Fig/dnn/throughput.txt};
    \addplot+ table[x=corun,y=oneTBB-100,col sep=space]{Fig/dnn/throughput.txt};
    \addplot+ [mark=diamond,color=cyan] table[x=corun,y=starpu-100,col sep=space]{Fig/dnn/throughput.txt};
    \legend{Taskflow, oneTBB, StarPU} 
    \end{axis}
  \end{tikzpicture}
  \caption{Throughput of corunning inference workloads on two 1920-layered neural networks, one with 4096 neurons per layer and another with 65536 neurons per layer.}
  \label{fig::inference_throughput}
\end{figure}

We next compare the throughput of each implementation 
by corunning the same inference program
to study the inter-operability of an implementation.
%Corunning machine learning programs is frequently used in 
%neural architecture search.
%The inter-operability of an implementation
%is critical to the overall throughput.
We corun the same inference program up to nine processes that compete 
for 40 CPUs and 4 GPUs.
We use weighted speedup to measure the throughput.
Figure \ref{fig::inference_throughput} plots the throughput
of corunning inference programs on two different sparse neural networks.
Taskflow outperforms oneTBB and StarPU across all coruns.
oneTBB is slightly better than StarPU because StarPU 
tends to keep all workers busy all the time and results in large numbers
of wasteful steals.
The largest difference is observed 
at five coruns of inferencing the 1920$\times$4096 neural network,
where our throughput is 1.9$\times$ higher than oneTBB
and 2.1$\times$ higher than StarPU.
These CPU- and GPU-intensive workloads
highlight the effectiveness of our heterogeneous work stealing.
By keeping a per-domain invariant,
we can control cross-domain wasteful steals
to a bounded value at any time during the execution.
%Meanwhile, our worker management algorithm 
%dynamically controls cross-domain wasteful steals
%to a bounded value based on available task parallelism at any time 
%during the execution.

\begin{figure}[!h]
  \centering
  \pgfplotsset{
    title style={font=\LARGE},
    label style={font=\large},
  }
  \begin{tikzpicture}[scale=0.47]
    \begin{axis}[
      title= Capturer (1920$\times$4096),
      xlabel=Runtime (ms),
      legend pos=north west,
      xbar,
      %legend style={at={(0.50,0.56)},anchor=west}
      %y filter/.code={\pgfmathparse{#1/1000}\pgfmathresult}
      symbolic y coords = {cudaFlow, Capturer 8, Capturer 4, Capturer 2, Capturer 1},
    ]
    \addplot coordinates { (750,cudaFlow) (794,Capturer 8) (757,Capturer 4)
           (754,Capturer 2) (943,Capturer 1) };
    %\legend{Taskflow, oneTBB}
    \end{axis}
  \end{tikzpicture}
  \begin{tikzpicture}[scale=0.47]
    \begin{axis}[
      title= Capturer (1920$\times$65536),
      xlabel=Runtime (ms),
      legend pos=north west,
      xbar,
      symbolic y coords = {cudaFlow, Capturer 8, Capturer 4, Capturer 2, Capturer 1},
    ]
    \addplot coordinates { (7980,cudaFlow) (9620,Capturer 1)
           (8028,Capturer 2) (7990,Capturer 4) (8124,Capturer 8) };
    %\legend{Taskflow, oneTBB}
    \end{axis}
  \end{tikzpicture}
  \caption{Performance of our cudaFlow capturer using 1, 2, 4, and 8 streams to complete 
  the inference of two neural networks.}
  \label{fig::optimize}
\end{figure}
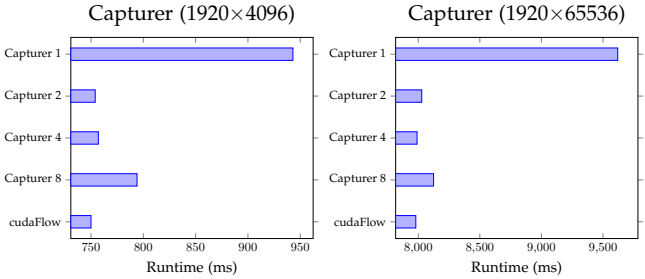

We study the performance of our cudaFlow capturer 
%(Algorithm \ref{alg::optimize}) using different numbers of streams.
using different numbers of streams (i.e., $max\_streams$).
For complex GPU workloads like Figure \ref{fig::snig},
stream concurrency is crucial to GPU performance.
%We do not report the data for the placement experiment because
%its cudaFlow is relatively simple.
As shown in Figure \ref{fig::optimize},
explicit construction of a CUDA graph using cudaFlow achieves
the best performance, 
because the CUDA runtime can dynamically decide the stream concurrency
with internal optimization.
For applications that must use existing stream-based APIs,
our cudaFlow capturer achieves comparable performance as cudaFlow
by using two or four streams.
Taking the 1920$\times$65536 neural network for example,
the difference between our capturer of four streams and cudaFlow
is only 10 ms.
For this particular workload,
we do not observe any performance benefit beyond four streams.
Application developers can fine-tune this number.

\begin{figure}[!h]
  \centering
  \pgfplotsset{
    title style={font=\LARGE},
    label style={font=\large},
  }
  \begin{tikzpicture}[scale=0.49]
    \begin{axis}[
      title= cudaFlow vs Stream (runtime),
      ylabel=Runtime,
      ymax=4000,
      xlabel=GPU Count,
      xtick={1, 2, 3, 4},
      legend pos=north east,
      %legend style={at={(0.50,0.56)},anchor=west}
      %y filter/.code={\pgfmathparse{#1/1000}\pgfmathresult}
    ]
    \addplot+ table[x=gpu,y=tf-cpu,col sep=space]{Fig/dnn/cudaflow_overhead.txt};
    \addplot+ [mark=o,color=blue] table[x=gpu,y=s1-cpu,col sep=space]{Fig/dnn/cudaflow_overhead.txt};
    \addplot+ [mark=x,color=blue] table[x=gpu,y=s2-cpu,col sep=space]{Fig/dnn/cudaflow_overhead.txt};
    \addplot+ [mark=+,color=blue] table[x=gpu,y=s4-cpu,col sep=space]{Fig/dnn/cudaflow_overhead.txt};
    \legend{cudaFlow, CUDA Stream 1, CUDA Stream 2, CUDA Stream 4}
    \end{axis}
  \end{tikzpicture}
  \begin{tikzpicture}[scale=0.49]
    \begin{axis}[
      title= cudaFlow vs Stream (mem),
      xlabel=GPU Count,
      ylabel=Maximum RSS (GB),
      legend pos=south east,
      xtick={1, 2, 3, 4},
      %legend style={at={(0.53,0.56)},anchor=west}
      %ymax=2.5
      y filter/.code={\pgfmathparse{#1/100000}\pgfmathresult}
    ]
    \addplot+ table[x=gpu,y=tf-mem,col sep=space]{Fig/dnn/cudaflow_overhead.txt};
    \addplot+ [mark=o,color=blue]table[x=gpu,y=s1-mem,col sep=space]{Fig/dnn/cudaflow_overhead.txt};
    \addplot+ [mark=x,color=blue] table[x=gpu,y=s2-mem,col sep=space]{Fig/dnn/cudaflow_overhead.txt};
    \addplot+ [mark=+,color=blue] table[x=gpu,y=s4-mem,col sep=space]{Fig/dnn/cudaflow_overhead.txt};
    \legend{cudaFlow, CUDA Stream 1, CUDA Stream 2, CUDA Stream 4}
    \end{axis}
  \end{tikzpicture}
  \caption{Comparison of runtime and memory between cudaFlow (CUDA Graph) and stream-based execution.}
  \label{fig::dnn_cudaflow_overhead}
\end{figure}

We finally compare the performance of cudaFlow with stream-based execution.
As shown in Figure \ref{fig::dnn_cudaflow_overhead},
the line cudaFlow represents our default implementation using explicit CUDA graph
construction, and
the other lines represent stream-based implementations for the same task graph
using one, two, and four streams.
The advantage of CUDA Graph is clearly demonstrated in this large machine learning
workload of over 2K dependent GPU operations per cudaFlow.
Under four streams that deliver the best performance for the baseline,
cudaFlow is 1.5$\times$ (1451 vs 2172) faster at one GPU and is 1.9$\times$ (750 vs 1423) 
faster at four GPUs.
The cost of this performance improvement is increased memory usage
because CUDA Graph needs to store all the operating parameters
in the graph.
For instance, under four streams,
cudaFlow has 4\% and 6\% higher memory usage than stream-based execution at
one and four GPUs, respectively.

%------------------------------------------------------------------------------
% Related Work
%------------------------------------------------------------------------------

\section{Related Work}

%\textbf{Parallel CAD algorithms} have been extensively studied in recent years
%to reduce the long design time of today's IC implementation
%\cite{Huang_20_01, Huang_20_02, DREAMPlace}.
%Most of these algorithms are point solutions 
%that do not scale beyond a few threads~\cite{Lu_18_01}.
%To achieve transformational performance milestones, 
%new parallel CAD algorithms must harness the power of heterogeneous parallelism.
%However, these milestones are too difficult to reach without the aid
%of a general-purpose system runtime to tackle the complexity built into
%the implementation.
%Despite some domain-specific improvement~\cite{Huang_19_01, Galois},
%related system research remains nascent.

\subsection{Heterogeneous Programming Systems}

Heterogeneous programming systems
are the main driving force to advance scientific computing.
%~\cite{Vetter_18_01}.
Directive-based programming models~\cite{OmpSs, OpenACC, OpenMP, XKAAPI, OpenMPC}
allow users to augment program information 
of loop mapping onto CPUs/GPUs and data sharing rules
to designated compilers for automatic parallel code generation.
These models are good at loop-based parallelism
but cannot handle irregular task graph patterns efficiently~\cite{Lee_12_01}.
Functional approaches~\cite{TBB, Fastflow, StarPU, Legion, PaRSEC, Kokkos, Huang_21_02, HPX, Charm++, Lima_15_01}
offer either implicit or explicit task graph constructs
that are more flexible in runtime control and on-demand tasking.
Each of these systems has its pros and cons.
However, few of them enable end-to-end expressions of heterogeneously dependent
tasks with general control flow.

\subsection{Heterogeneous Scheduling Algorithms} 

Among various heterogeneous runtimes,
work stealing is a popular strategy to reduce the complexity of 
load balancing~\cite{ABP, Lima_15_01} 
and has inspired the designs of many 
parallel runtimes~\cite{TBB, Nabbit, TPL, Cilk++, X10}.
%Most progress were made for manycore CPU architectures.
%An efficient counterpart for hybrid CPU-GPU or more general heterogeneous systems 
%remains demanding.
A key challenge in work-stealing designs is worker management.
Instead of keeping all workers busy most of the time~\cite{ABP, StarPU, Huang_21_02},
both oneTBB~\cite{TBB} and BWS~\cite{BWS} have developed sleep-based strategies.
oneTBB employs a mixed strategy of fixed-number worker notification,
exponential backoff, and noop assembly.
BWS modifies OS kernel to alter the yield behavior.
\cite{Lin_20_01} takes inspiration from BWS and oneTBB to develop an 
adaptive work-stealing algorithm to minimize the number of wasteful steals.
Other approaches, such as \cite{A-STEAL} that targets a space-sharing environment,
\cite{EWS} that tunes hardware frequency scaling,
\cite{SadayWS, LifelineLoadBalancing} that balance load on distributed memory,
\cite{LAWS, Yi_10_01, Suksompong_16_01, Han_18_01} that deal with data locality,
and \cite{DeterministicWS} that focuses on memory-bound applications
have improved work stealing in certain performance aspects, but their results
are limited to the CPU domain.
How to migrate the above approaches to a heterogeneous target remains an open question.

In terms of GPU-based task schedulers,
Whippletree~\cite{Whippletree} design a fine-grained resource scheduling algorithm
for sparse and scattered parallelism atop a custom program model.
\cite{RLGPUPlacement} leverages reinforcement learning to place machine learning workloads
onto GPUs.
Hipacc~\cite{Hipacc} introduces a pipeline-based optimization for CUDA graphs
to speed up image processing workloads.
\cite{Yu_20_01} develops a compiler to transforms OpenMP directives to a CUDA graph.
These works have primarily focused on scheduling GPU tasks in various applications,
which are orthogonal to our generic heterogeneous scheduling approaches.

%------------------------------------------------------------------------------
% Conclusion
%------------------------------------------------------------------------------

\section{Acknowledgements}

The project is supported by the DARPA contract FA 8650-18-2-7843
and the NSF grant CCF-2126672.
We appreciate all Taskflow contributors and 
reviewers' comments for improving this paper.

\section{Conclusion}

In this paper, we have introduced Taskflow, 
a lightweight task graph computing system
to streamline the creation of heterogeneous programs with control flow.
Taskflow has introduced a new programming model that enables
an end-to-end expression of heterogeneously dependent tasks 
with general control flow.
We have developed an efficient work-stealing runtime optimized for
latency, energy efficiency, and throughput,
and derived theory results to justify its efficiency.
We have evaluated the performance of Taskflow on 
both micro-benchmarks and real applications.
As an example,
Taskflow solved a large-scale machine learning problem up to 
29\% faster,
1.5$\times$ less memory,
and 1.9$\times$ higher throughput
than the industrial system, oneTBB,
on a machine of 40 CPUs and 4 GPUs.
%We appreciate all Taskflow contributors
%and reviewers' comments for improving this paper.

Taskflow is an on-going project under active development.
We are currently exploring three directions:
First, we are designing a distributed tasking model based on partitioned
taskflow containers with each container running on a remote machine.
Second, we are extending our model to incorporate SYCL~\cite{SYCL}
to provide a single-source heterogeneous task graph programming environment.
The author Dr. Huang is a member of SYCL Advisory Panel and is collaborating
with the working group to design a new SYCL Graph abstraction.
Third, we are researching automatic translation methods
between different task graph programming models using
Taskflow as an intermediate representation.
Programming-model translation has emerged as an important research area in 
today's diverse computing environments because no one programming model
is optimal across all applications.
The recent 2021 DOE X-Stack program directly calls for novel
translation methods to facilitate performance optimizations 
on different computing environments.

One important future direction is to collaborate with Nvidia CUDA teams
to design a conditional tasking interface within the CUDA Graph itself.
This design will enable efficient control-flow decisions to be made 
completely in CUDA runtime,
thereby largely reducing the control-flow cost
between CPU and GPU.

% -----------------------------------------------------------------------------

%\clearpage
\bibliographystyle{plain}
\bibliography{ms}
%\bibliographystyle{IEEEtran}
%\bibliography{IEEEabrv,references}

\begin{IEEEbiography}[{\includegraphics[width=1in,height=1.25in,clip,keepaspectratio]{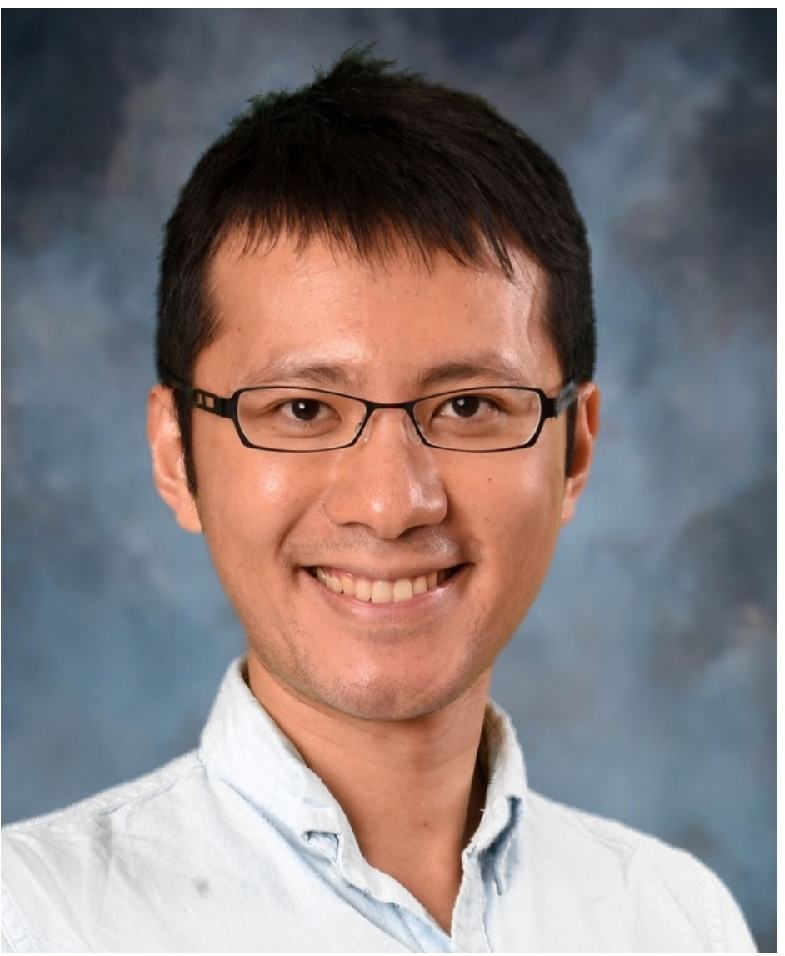}}]{Tsung-Wei Huang}
received the B.S. and M.S. degrees from the Department of Computer Science, 
National Cheng Kung University (NCKU), Tainan, Taiwan, in 2010 and 2011, respectively.
He obtained his Ph.D. degree in the
Electrical and Computer Engineering (ECE) Department
at the University of Illinois at Urbana-Champaign (UIUC). 
He is currently an assistant professor in the ECE department at the
University of Utah.
Dr. Huang has been building software systems for parallel computing and timing analysis.
%He is the recipient of the 2019 ACM/SIGDA Outstand PhD Dissertation Award.
%(supervised by Prof. Martin Wong).
%He won several awards including 
%1st place in the 2010 ACM/SIGDA Student Research Competition (SRC),
%2nd place in the 2011 ACM Student Research Competition Grand Final across all disciplines,
%1st, 2nd, and 1st places in the TAU Timing Analysis Contest from 2014 through 2016,
%and 1st place in the 2017 ACM/SIGDA CADathlon Programming Contest.
%He also received the 2015 Rambus Computer Engineering Research Fellowship
%and the 2016 Yi-Min Wang and Pi-Yu Chung Endowed Research Award
%for outstanding computer engineering research at the UIUC.
His PhD thesis won the prestigious 2019 ACM SIGDA Outstanding PhD Dissertation Award
for his contributions to distributed and parallel VLSI timing analysis.
\end{IEEEbiography}

\begin{IEEEbiography}[{\includegraphics[width=1in,height=1.25in,clip,keepaspectratio]{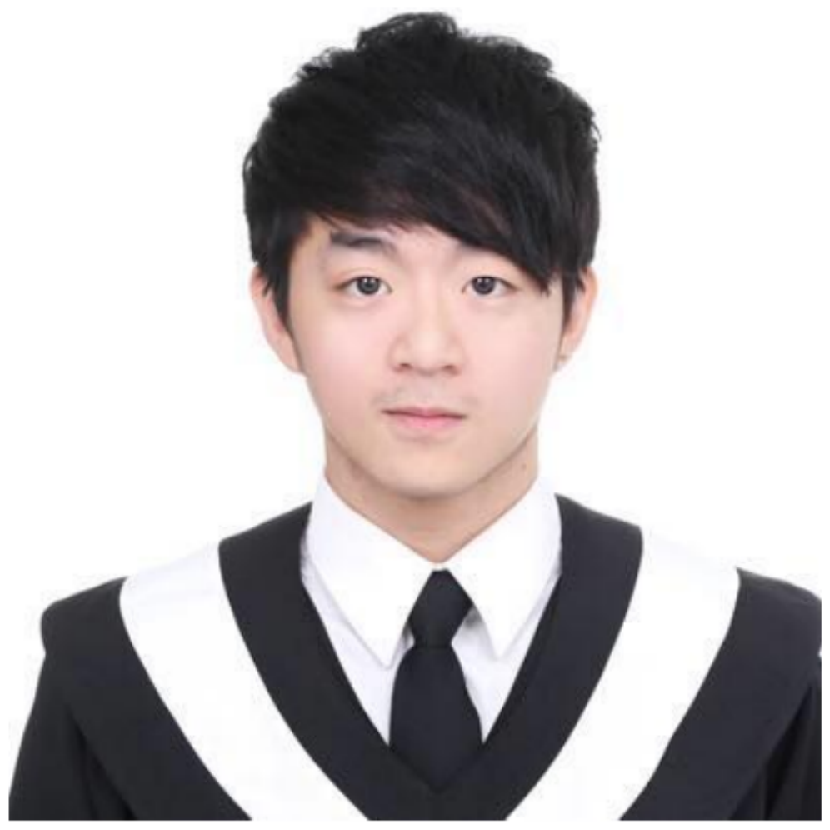}}]{Dian-Lun Lin}
received the B.S.~degree from the Department of Electrical Engineering at Taiwan's Cheng Kung University and M.S.~degree from the Department of Computer Science at National Taiwan University.
He is current a Ph.D. student at the Department of Electrical and Computer Engineering at the University of Utah.
His research interests are in parallel and heterogeneous computing with a specific focus on CAD applications.
\end{IEEEbiography}

\begin{IEEEbiography}[{\includegraphics[width=1in,height=1.25in,clip,keepaspectratio]{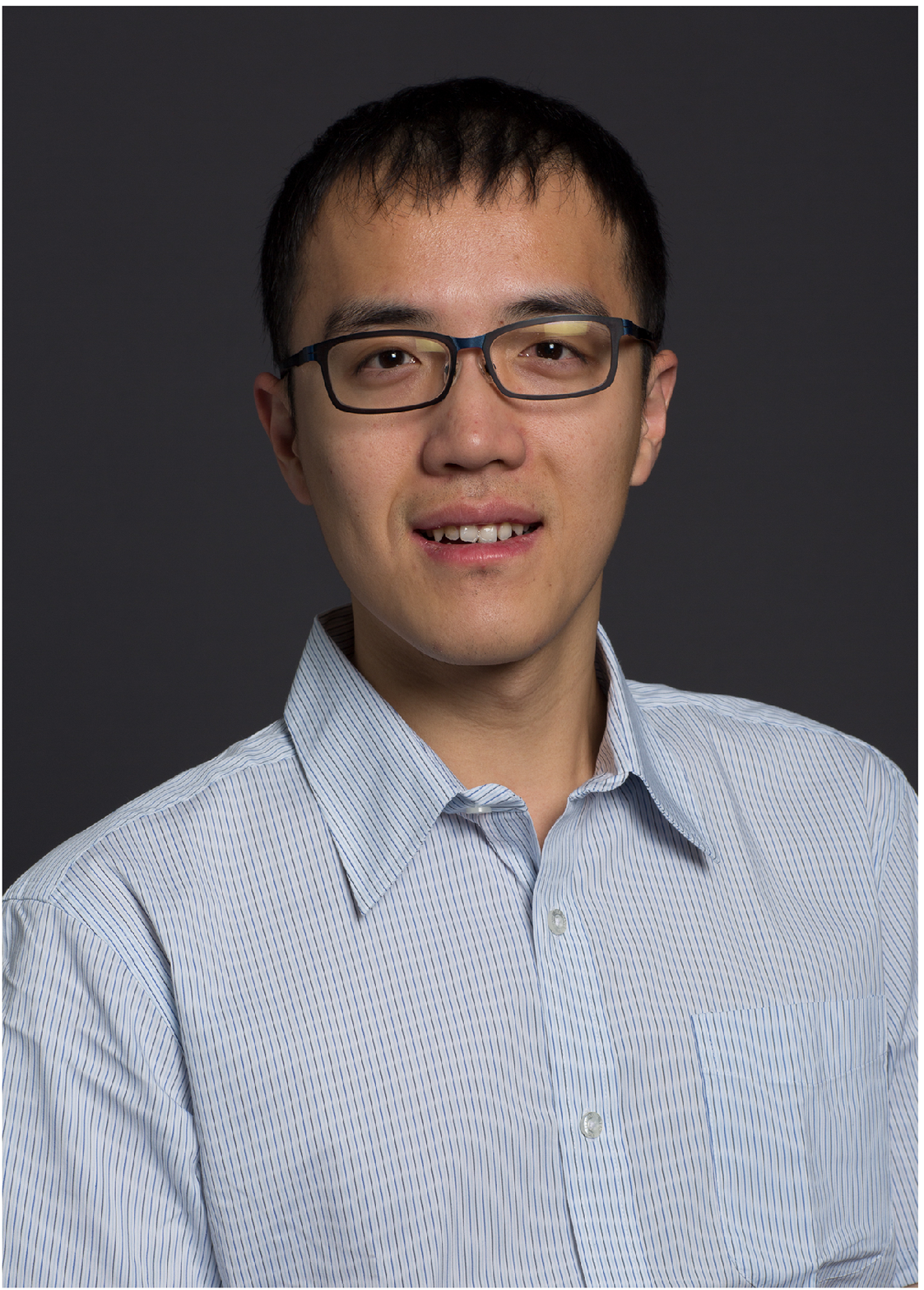}}]{Chun-Xun Lin}
received the B.S. degree in Electrical Engineering from the National Cheng Kung University,
Tainan, Taiwan, and the M.S. degree in Electronics Engineering from the Graduate Institute of Electronics
Engineering, National Taiwan University, Taipei, Taiwan, in 2009 and 2011, respectively. 
He received his Ph.D. degree from the department of
Electrical and Computer Engineering (ECE) at the University of Illinois
at Urbana-Champaign (UIUC) in 2020.
His research interest is in parallel processing.
\end{IEEEbiography}

\begin{IEEEbiography}[{\includegraphics[width=1in,height=1.25in,clip,keepaspectratio]{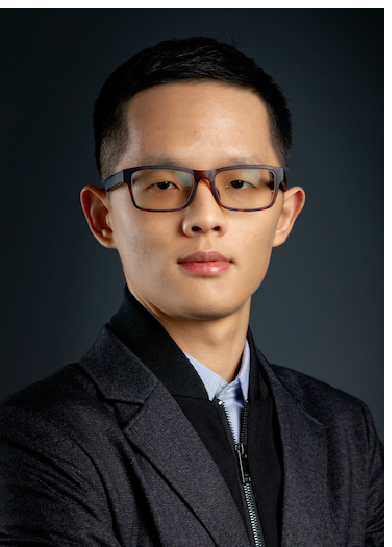}}]{Yibo Lin}
(S’16--M’19)
received the B.S.~degree in microelectronics from Shanghai Jiaotong University in 2013,
and his Ph.D. degree from the Electrical and Computer Engineering Department of the University of Texas at Austin in 2018.
He is current an assistant professor in the Computer Science Department associated with the Center for Energy-Efficient Computing and Applications at
Peking University, China.
His research interests include physical design, machine learning applications, GPU acceleration, and hardware security.
\end{IEEEbiography}

%
%We give a direct proof for Theorem \ref{theorem::underscription}
%based on Lemma \ref{lemma::invariant} and Theorem \ref{theorem::correctness}.
%
%
%
%\subsubsection{Proof of Theorem \ref{theorem::wasteful_steals}}

\end{document}